\title{A Local Logic for Realizability in Web Service Choreographies}
\author{R. Ramanujam
\institute{IMSc, Chennai}
\email{jam@imsc.res.in}
\and
S. Sheerazuddin
\institute{SSNCE, Chennai}
\email{sheerazuddins@ssn.edu.in}
}
\newtheorem{dfn}{Definition}[section]
\newtheorem{lem}[dfn]{Lemma}
\newtheorem{thm}[dfn]{Theorem}
\newcommand{\RightBox}{{\phantom{a}}\hfill $\Box$ \\}
\newcommand{\diamin}{\Diamond\kern-0.5em{\raisebox{.25ex}{\rm -}}\kern0.175em}
\newcommand{\nxt}{\mbox{$\bigcirc$}}
\newcommand{\Imply}{~~\mathbf{\supset}~~}
\newcommand{\Not}{\mbox{$\lnot$}}
\newcommand{\xor}{\mathbf{\oplus}}
\newcommand{\True}{\mathit{True}}
\newcommand{\False}{\mathit{False}}
\newcommand{\cupover}{\displaystyle \bigcup}
\newcommand{\otilde}[1]{\widetilde{#1}}
\newcommand{\Sigtil}{\mbox{$\otilde{\Sigma}$}}
\newcommand{\DA}{\mbox{$\Sigtil~=~(\Sigma_1, \dots, \Sigma_n)$}}
\newcommand{\defn}{\mbox{$~\stackrel{\rm def}{=}~$}}
\newcommand{\step}[1]{\mbox{$\stackrel{#1}{\to}$}}
\newcommand{\Funnyto}{\rightsquigarrow}
\newcommand{\longstep}[1]{\mbox{$\stackrel{#1}{\longrightarrow}$}}
\newcommand{\To}{\Rightarrow}
\newcommand{\From}{\Leftarrow}
\newcommand{\presca}[1]{\mbox{${ }^{\bullet}#1$}}
\newcommand{\postsca}[1]{\mbox{$#1 \, { }^{\bullet}$}}%
\newcommand{\ldot}{{\rm <}\kern-0.37em{\raisebox{.25ex}{\bf .}}\kern0.375em}
\newcommand{\calL}{\mathcal{L}}
\newcommand{\posetlang}[1]{\mbox{${\calL}^{po}(#1)$}}
\newcommand{\calC}{\mathcal{C}}
\newcommand{\calM}{\mathcal{M}}
\begin{document}
\maketitle

\begin{abstract}
Web service choreographies specify conditions on observable interactions among the services. An important question in this regard is realizability: given a choreography $C$, does there exist a set of service implementations $I$ that conform to $C$ ? Further, if $C$ is realizable, is there an algorithm to construct implementations in $I$ ? We propose a local temporal logic in which choreographies can be specified, and for specifications in the logic, we solve the realizability problem by constructing service implementations (when they exist) as communicating automata. These are nondeterministic finite state automata with a coupling relation. We also report on an implementation of the realizability algorithm and discuss experimental results. 
\end{abstract}

\section{Introduction}
The study of composition of distributed web services has received great attention.
When we know what kind of services are available,
specifying a sequence of communications to and from them can well
suffice to describe the overall service required. Such a {\em global} 
specification of interaction composition has been termed {\bf choreography}
(\cite{BFHS}).  The distributed services can then be synthesized as 
autonomous agents that interact in conformance with the given choreography. 
This offers an abstract methodology for the design and development of web 
services. The choreography and its implementation may be put together in 
a {\bf choreography model} \cite{SBFZ07}, $M=(C,I)$, where, as already 
mentioned, $C$ is a specification of the desired global behaviours (a 
choreography), and $I$ a representation of local services and their 
local behaviours (an implementation) which collectively should satisfy 
the specified global behaviour. A {\bf choreography modeling language} 
\cite{SBFZ07} provides the means to define choreography models, i.e., 
choreographies, service implementations, and their semantics including 
a mechanism to compare global behaviors generated by service implementations 
with a choreography. 

A principal challenge for such a methodology is that choreographies
be {\bf realizable} (\cite{wscdl}). What may seem simple global specifications may
yet be hard, or even impossible, to implement as a composition of
distributed services. The reason is simple: while the global specification
requires a communication between $1$ and $2$ to precede that between
$3$ and $4$, the latter, lacking knowledge of the former, may well
communicate earlier. Thus the composition would admit forbidden
behaviours. In general, many seemingly innocuous specifications
may be unrealizable. Even checking whether a choreography is realizable
may be hard, depending on the expressive power of the formalism in
which the choreography is specified.

Closely related, but more manageable, is the problem of {\bf conformance}:
check whether a given set of services implement the given choreography
specification. Once again, the expressive power of the specification
formalism is critical for providing algorithmic solutions to the problem.

The two problems relate to the satisfiability and model checking problems
of associated logics. Since the 1980's a rich body of literature has been
built in the study of such problems (\cite{CGP}).

In the literature, choreographies have been formally specified using
automata \cite{FBS}, UML collaboration diagrams \cite{BF08}, interaction
Petri nets \cite{DW07}, or process algebra \cite{CHY07}. The service
implementations have been modelled variously as Mealy machines, Petri
nets or process algebra. Visual formalisms (such as message sequence
charts \cite{RGG96}) are naturally attractive and intuitive for 
choreography specifications but can be imprecise. For instance,
it is hard to distinguish between interactions that are permissible and
those that must indeed take place. While machine models are precise they
might require too much detail. 

A natural idea in this context is the use of a logical formalism for 
choreography specification and that of finite state machines for their 
implementation. When the formulas of the logic specify global interaction
behaviour and models for the logic are defined using products of machines, 
realizability and conformance naturally correspond to the satisfiability
and model checking problems for the logic.

Once again, a natural candidate for such a logic is that of temporal
logic, linear time or branching time (\cite{P77}, \cite{CGP}). One
difficulty with the use of temporal logics for global interaction
specifications is that sequentiality is natural in such logics but
concurrency poses challenges. It is rather easy to come up with
specifications in temporal logics that are not realizable. On the
other hand, if we wish to algorithmically decide whether a temporal
specification is realizable or not, the problem is often undecidable,
and in some cases of high complexity even when decidable. (See 
\cite{AEY05} and \cite{AMNN05} for decidability of the closely related
problem of realizability of message sequence graphs.)

One simple way out is to design the temporal logic, limit its 
expressiveness drastically, so that we can ensure {\em by diktat}
that every satisfiable formula in it is realizable. This is the line we follow
here, initiated by \cite{T95} and developed by \cite{R96}, \cite{MR}.
In such {\bf local} temporal logics, we can ensure realizability by 
design. The models for these logics are presented as a system
of communicating automata (SCA). Both realizability and conformance
are decidable in this setting.

Our work is similar to that of McNeile \cite{McN10} who
extend the process algebra based formalism of Protocol Modeling
\cite{MS06} to define a notion of protocol contract and describe
choreographies and participant contracts. They give sufficient conditions
for realizability in both synchronous and asynchronous collaborations.

The language-based choreography realizability problem considered
in this paper was proposed for conversation protocols in \cite{FBS}
where sufficient conditions for realizability were given. Halle \&
Bultan \cite{HB10} consider the realizability of a particular class of
choreographies called arbitrary-initiator protocols for which sufficiency
conditions given in \cite{FBS} fail. The algorithm for choreography
realizability works by computing a finite-state model that keeps track
of the information about the global state of a conversation protocol that
each peer can deduce from the messages it sends and receives. Thereafter,
the realizability can be checked by searching for disagreements between
peers' deduced states.

\cite{SBR12} model choreography as UML collaboration diagrams
and check their realizability. They have also
implemented a tool which not only checks the realizability of choreography
specified using collaboration diagrams but also synthesize the service
implementations that realize the choreography \cite{BFF09}.

\cite{BBO12} consider the realizability problem for
choreographies modelled as conversation protocols \cite{FBS}(finite
automata over send events). They give necessary and sufficient
conditions which need to be satisfied by the conversations for them to be
realizable. They implement the proposed realizability check and show that
it can efficiently determine the realizability of  a subclass of
contracts \cite{Fahn06} and UML collaboration diagrams \cite{BF08},
apart from conversation protocols.

The work on session types \cite{HYC08} is also related to realizability of
conversation protocols and has been used as a formal basis for modelling
choreography languages \cite{CHY07}.

The work presented in \cite{LW09} checks choreography realizability
using the concept of controllability. Given a choreography description, a
monitor service is computed from that choreography. The monitor service
is used as a centralized orchestrator of the interaction to compute
the distributes peers. The choreography is said to be realizable if
the monitor service is controllable, that is, there exists a set of
peers such that the composition of the monitor service and those peers
is deadlock-free.

Pistore et al., \cite{KP06} present a formal framework for the definition
of both global choreography as well as local peer implementations. They
introduce a hierarchy of realizability notions that allows for capturing
various properties of the global specifications, and associate specific
communication models to each of them. Finally, they present an approach,
based on the analysis of communication models, to associate a particular
level of realizability to the choreography.

Decker and Weske \cite{DW07} model choreography as interaction Petri
nets, an extension of Petri nets for interaction modelling, and propose
an algorithm for deriving corresponding behavioural interfaces. The
message exchanges are assumed to be asynchronous in nature. They define
two properties, realizability and local enforceability, for interaction
Petri nets and introduce algorithms for checking these properties.

In the light of this literature, the need and relevance of the work
presented in this paper is naturally questionable. The contribution
of this paper is two fold: one, to argue that {\bf partial orders}
provide a natural way of describing potential interactions -- sequential
and concurrent; two, to suggest that a test for realizability be
translated to an expressiveness restriction on the formalism for
choreography specification in such a way that {\em every} consistent
specification is realizable. While realizability by design is not in itself 
new, the application of automata based methods on partial orders can lead to
new ways of defining `good' choreographies. A major advantage of such partial order
based specification is that we can reason about components (services)
separately, and limit global reasoning to the minimum required. In 
terms of worst case complexity this makes no difference, but in practice 
this is of great use.

The paper is organized as follows. In the next section, we define
choreography realizability and discuss examples of realizable
and unrealizable choreographies. Further, we note that even though
choreographies, modelled as conversation protocols, are defined over
send events they give rise to partial order behaviours in service
implementations.  We then propose $p$-LTL, which admits only realizable 
choreographies, and present systems of communicating automata (SCA) to 
model sets of service implementations admitted by choreographies in 
$p$-LTL. We show that realizability is decidable, and discuss some 
experiments in implementing the decision algorithm. Detailed proofs are 
relegated to the Appendix.

\section{Choreography realizability}
A choreography specification $C$ is {\bf realizable} if there is an 
{\bf implementation} $I$ of the interacting services such that once
the system is initialized, its processes behave according to
the choreography specification.  

\begin{figure*}
\begin{center}
\begin{tikzpicture}


	\node 	(f0)		at	(-4,0) 	[rectangle, thick, draw=red!50,fill=green!20,inner sep=0pt,minimum size=0.85cm]{$a$};

	\node 	(f1)		at	(0,0) 	[rectangle, thick, draw=red!50,fill=green!20,inner sep=0pt,minimum size=0.85cm]{$j$};

	\node 	(f2)		at	(4,0) 	[rectangle, thick, draw=red!50,fill=green!20,inner sep=0pt,minimum size=0.85cm]{$h$};

	\node 	(f21)		at	(3.7,0.4) 	[rectangle]{};
	\node 	(f22)		at	(3.7,-0.4) 	[rectangle]{};

	\node 	(f11)		at	(0.35,0.4) 	[rectangle]{};
	\node 	(f12)		at	(0.35,-0.4) 	[rectangle]{};

	\node 	(f13)		at	(-0.3,0.4) 	[rectangle]{};
	\node 	(f14)		at	(-0.3,-0.4) 	[rectangle]{};

	\node 	(f01)		at	(-3.7,0.4) 	[rectangle]{};
	\node 	(f02)		at	(-3.7,-0.4) 	[rectangle]{};

	\draw[<-, thick]	(f01)	to	node[auto]{$query$}	(f13);
	\draw[->, thick]	(f02)	to	node[auto,swap]{$suggest$}	(f14);

	\draw[->, thick]	(f12)	to	node[auto,swap]{$reserve$}	(f22);
	\draw[->, thick]	(f21)	to	node[auto,swap]{$confirm$}	(f11);

	\node 	(g)		at	(-5,3) 	[circle]{};

	\node 	(g0)		at	(-4,3) 	[circle, thick, draw=red!50,fill=blue!20,inner sep=0pt,minimum size=0.40cm]{};

	\node 	(g1)		at	(-2,3) 	[circle, thick, draw=red!50,fill=blue!20,inner sep=0pt,minimum size=0.40cm]{};

	\node 	(g2)		at	(0,3) 	[circle, thick, draw=red!50,fill=blue!20,inner sep=0pt,minimum size=0.40cm]{};

	\node 	(g3)		at	(2,3) 	[circle, thick, draw=red!50,fill=blue!20,inner sep=0pt,minimum size=0.40cm]{};

	\node 			at	(4,3) 	[circle, thick, draw=black!50,inner sep=0pt,minimum size=0.50cm]{};

	\node 	(g4)		at	(4,3) 	[circle, thick, draw=red!50,fill=blue!20,inner sep=0pt,minimum size=0.40cm]{};

	\node 	(g5)		at	(0,5) 	[circle, thick, draw=red!50,fill=blue!20,inner sep=0pt,minimum size=0.40cm]{};

	\node 			at	(5,3) 	[circle,]{$:C_0$};

	\draw[->, very thick]	(g)	to	node[auto,swap]{}	(g0);

	\draw[->, thick]	(g0)	to	node[auto,swap]{$query$}	(g1);

	\draw[->, thick]	(g1)	to	node[auto,swap]{$suggest$}	(g2);

	\draw[->, thick]	(g2)	to	node[auto,swap]{$reserve$}	(g3);

	\draw[->, thick]	(g3)	to	node[auto,swap]{$confirm$}	(g4);

	\draw[->, bend left=30, thick]	(g2)	to	node[auto]{$query$}	(g5);

	\draw[->, bend left=30, thick]	(g5)	to	node[auto]{$suggest$}	(g2);


\end{tikzpicture}
\end{center}
\caption{A Realizable Choreography}
\label{fig:real-chor}
\end{figure*}
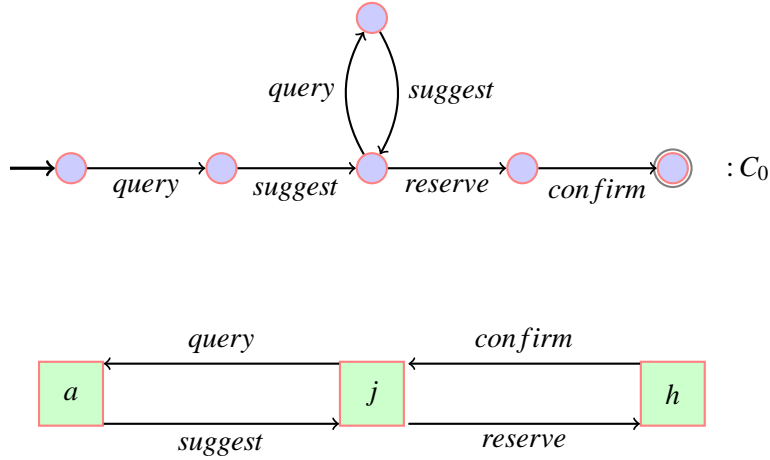 

Consider the choreography $C_0$ \cite{BFS07} represented as conversation
protocol, given in Figure \ref{fig:real-chor}. The conversation
protocol, a collection of sequences  of send events (conversations), is modelled as 
a nondeterministic finite state automaton.  Let $L(C_0)$ denote the set of all conversations
in $C_0$.  In $C_0$ there are three services interacting with each other:
John ($j$), Agent ($a$) and Hotel ($h$). John wants to take a vacation. He has certain constraints
about where he wants to vacation, so he sends a query to his Agent stating his
constraints and asking for advice. The Agent responds to John’s query by sending him a suggestion. If John is not happy with the Agent’s suggestion he sends another query requesting another suggestion. Eventually, John makes up his
mind and sends a reservation request to the hotel he picks. The hotel responds
to John’s reservation request with a confirmation message.

One set of service implementations $I_0$ for the choreography $C_0$
is given in Figure \ref{fig:real-impl}. Each service is implemented
as a nondeterministic finite state automaton (NFA) over send and
receive events, equipped with FIFO queues for sending message to other services.
  The causal dependence among these events (e.g. that
a message can be received only after it is sent) is represented 
by couplings, shown as $\lambda$ labelled arrows. For
example, consider the case when service $s_2$ sends a message $query$
to service $s_1$. The transition $(q_0,?query,q_1)$ in $I_0$ must be
coupled with $(q_4,!query,q_5)$, that is, $(q_0,?query,q_1)$ must happen
before  $(q_4,!query,q_5)$. This coupling is represented by a $\lambda$
labelled arrow from the target state of $!query$ event ($q_1$) to target
state of $?query$ event ($q_5$).

\begin{figure*}
\begin{center}
\begin{tikzpicture}[scale=0.6]


	\node 	(f)		at	(0,8) 	[circle]{$a$};

	\node 	(f0)		at	(0,6) 	[circle, thick, draw=red!60,fill=blue!20,inner sep=0pt,minimum size=0.50cm]{$q_0$};

	\node 	(f1)		at	(0,4) 	[circle, thick, draw=red!60,fill=blue!20,inner sep=0pt,minimum size=0.50cm]{$q_1$};

	\node 	(f2)		at	(0,2) 	[circle, thick, draw=red!60,fill=blue!20,inner sep=0pt,minimum size=0.50cm]{$q_2$};

	\node 			at	(0,2) 	[circle, thick, draw=black!60,inner sep=0pt,minimum size=0.60cm]{};

	\node 	(f3)		at	(-3,2) 	[circle, thick, draw=red!60,fill=blue!20,inner sep=0pt,minimum size=0.50cm]{$q_3$};

	\draw[->, very thick]	(f)	to	node[auto]{}	(f0);

	\draw[->, thick]	(f0)	to	node[auto]{$?query$}	(f1);

	\draw[->, thick]	(f1)	to	node[auto]{$!suggest$}	(f2);

	\draw[->, bend left=30, thick]	(f2)	to	node[auto]{$?query$}	(f3);

	\draw[->, bend left=30, thick]	(f3)	to	node[auto]{$!suggest$}	(f2);

	\node 	(g)		at	(6,8) 	[circle]{$j$};

	\node 	(g0)		at	(6,6) 	[circle, thick, draw=red!60,fill=blue!20,inner sep=0pt,minimum size=0.50cm]{$q_4$};

	\node 	(g1)		at	(6,4) 	[circle, thick, draw=red!60,fill=blue!20,inner sep=0pt,minimum size=0.50cm]{$q_5$};

	\node 	(g2)		at	(6,2) 	[circle, thick, draw=red!60,fill=blue!20,inner sep=0pt,minimum size=0.50cm]{$q_6$};

	\node 	(g3)		at	(6,0) 	[circle, thick, draw=red!60,fill=blue!20,inner sep=0pt,minimum size=0.50cm]{$q_7$};

	\node 			at	(6,-2) 	[circle, thick, draw=black!60,inner sep=0pt,minimum size=0.60cm]{};

	\node 	(g4)		at	(6,-2) 	[circle, thick, draw=red!60,fill=blue!20,inner sep=0pt,minimum size=0.50cm]{$q_8$};

	\node 	(g5)		at	(3,2) 	[circle, thick, draw=red!60,fill=blue!20,inner sep=0pt,minimum size=0.50cm]{$q_9$};

	\draw[->, very thick]	(g)	to	node[auto]{}	(g0);

	\draw[->, thick]	(g0)	to	node[auto]{$!query$}	(g1);

	\draw[->, thick]	(g1)	to	node[auto]{$?suggest$}	(g2);

	\draw[->, thick]	(g2)	to	node[auto]{$!reserve$}	(g3);

	\draw[->, thick]	(g3)	to	node[auto]{$?confirm$}	(g4);

	\draw[->, bend left=30, thick]	(g2)	to	node[auto]{$!query$}	(g5);

	\draw[->, bend left=30, thick]	(g5)	to	node[auto]{$?suggest$}	(g2);


	\node 	(h)		at	(11,8) 	[circle]{$h$};

	\node 	(h0)		at	(11,6) 	[circle, thick, draw=red!60,fill=blue!20,inner sep=0pt,minimum size=0.50cm]{};

	\node 	(h1)		at	(11,4) 	[circle, thick, draw=red!60,fill=blue!20,inner sep=0pt,minimum size=0.50cm]{};

	\node 	(h2)		at	(11,2) 	[circle, thick, draw=red!60,fill=blue!20,inner sep=0pt,minimum size=0.50cm]{};

	\node 			at	(11,2) 	[circle, thick, draw=black!60,inner sep=0pt,minimum size=0.60cm]{};

	\draw[->, very thick]	(h)	to	node[auto]{}	(h0);

	\draw[->, thick]	(h0)	to	node[auto]{$?reserve$}	(h1);

	\draw[->, thick]	(h1)	to	node[auto]{$!confirm$}	(h2);


	\draw[->,very thin]	(g1)	to	node[auto,swap]{$\lambda$}	(f1);

	\draw[->,bend right=70,very thin]	(f2)	to	node[auto,swap]{$\lambda$}	(g2);

	\draw[<-,bend right=70,very thin]	(f3)	to	node[auto,swap]{$\lambda$}	(g5);

	\draw[->,bend right=30,very thin]	(g3)	to	node[auto]{$\lambda$}	(h1);

	\draw[<-,bend right=30,very thin]	(g4)	to	node[auto,swap]{$\lambda$}	(h2);

\end{tikzpicture}
\end{center}
\caption{A Sample Service Implementation for Choreography in Figure \ref{fig:real-chor}}
\label{fig:real-impl}
\end{figure*}
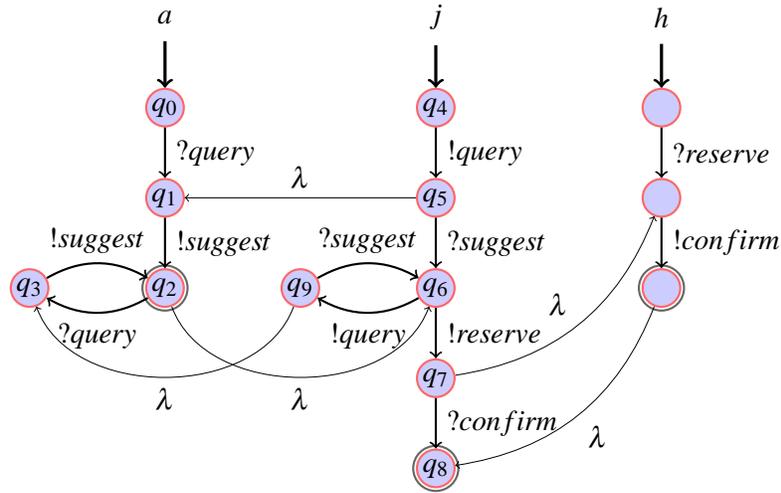

Considering the behaviour specified by $C_0$ as a collection of sequences
of send events is natural and simple, but hides concurrency information.
Two send events by distinct services, locally determined by them, can
proceed simultaneously or in any order. Therefore, the interactions among
service implementations $I_0$ are better viewed as partial orders on
send and receive events.  Let $\calL$ be the set of all such partially 
ordered executions of $I_0$. We call the objects in $\calL$ as {\bf diagrams}. 
A sample of the diagrams in $\calL$ is given in Figure \ref{fig:po-excn}.

Let $D$ be an arbitrary diagram in $\calL$ and $Lin(D)$ be the set of all
linearizations of $D$. A linearization of $D$ is a linear order on the events 
in $D$ which respects the given partial order.
Let $Chor(D)$ be obtained from $Lin(D)$ by removing
receive events from each sequence. Let $Chor(\calL)=\cupover_{D \in
\calL}Chor(D)$. We say that $I_0$ realizes $C_0$ if $Chor(\calL)=L(C_0)$.
For our example, this is easily seen to be the case.

\input{fig5}

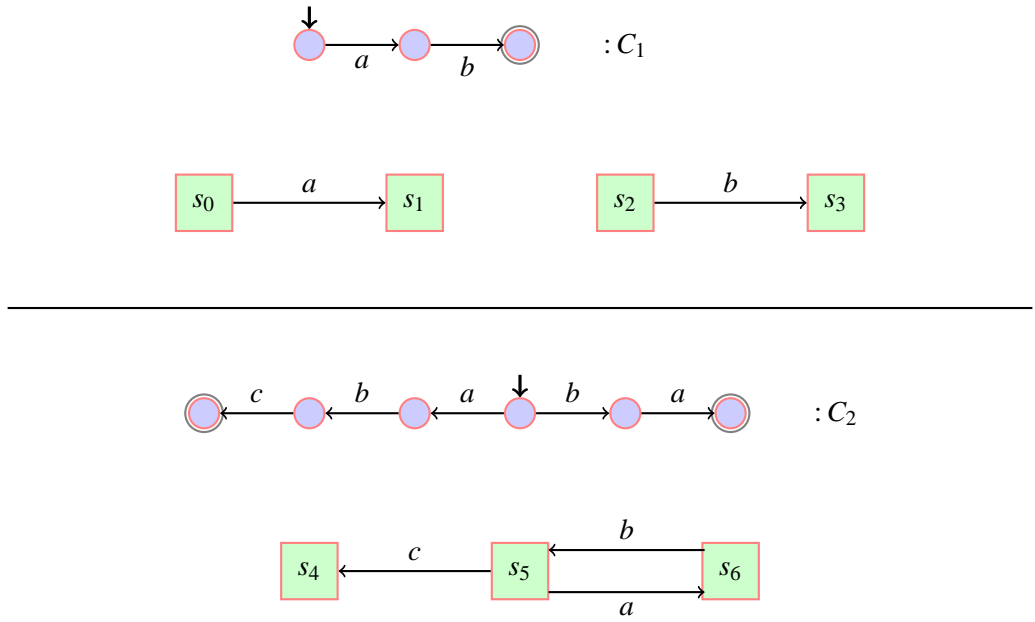
\begin{figure*}
\begin{center}
\begin{tikzpicture}[scale=0.7]


	\node 	(f0)		at	(-4,0) 	[rectangle, thick, draw=red!50,fill=green!20,inner sep=0pt,minimum size=0.75cm]{$s_4$};

	\node 	(f1)		at	(0,0) 	[rectangle, thick, draw=red!50,fill=green!20,inner sep=0pt,minimum size=0.75cm]{$s_5$};

	\node 	(f2)		at	(4,0) 	[rectangle, thick, draw=red!50,fill=green!20,inner sep=0pt,minimum size=0.75cm]{$s_6$};

	\node 	(f2')		at	(3.7,0.4) 	[rectangle]{};
	\node 	(f2'')		at	(3.7,-0.4) 	[rectangle]{};

	\node 	(f1')		at	(0.35,0.4) 	[rectangle]{};
	\node 	(f1'')		at	(0.35,-0.4) 	[rectangle]{};

	\draw[->, thick]	(f1)	to	node[auto,swap]{$c$}	(f0);
	\draw[->, thick]	(f1'')	to	node[auto,swap]{$a$}	(f2'');
	\draw[->, thick]	(f2')	to	node[auto,swap]{$b$}	(f1');

	\node 	(g)		at	(0,4) 	[circle]{};

	\node 	(g0)		at	(0,3) 	[circle, thick, draw=red!50,fill=blue!20,inner sep=0pt,minimum size=0.40cm]{};

	\node 	(g1)		at	(-2,3) 	[circle, thick, draw=red!50,fill=blue!20,inner sep=0pt,minimum size=0.40cm]{};

	\node 	(g2)		at	(-4,3) 	[circle, thick, draw=red!50,fill=blue!20,inner sep=0pt,minimum size=0.40cm]{};

	\node 	(g3)		at	(-6,3) 	[circle, thick, draw=red!50,fill=blue!20,inner sep=0pt,minimum size=0.40cm]{};

	\node 			at	(-6,3) 	[circle, thick, draw=black!50,inner sep=0pt,minimum size=0.50cm]{};

	\node 	(g1')		at	(2,3) 	[circle, thick, draw=red!50,fill=blue!20,inner sep=0pt,minimum size=0.40cm]{};

	\node 	(g2')		at	(4,3) 	[circle, thick, draw=red!50,fill=blue!20,inner sep=0pt,minimum size=0.40cm]{};

	\node 			at	(4,3) 	[circle, thick, draw=black!50,inner sep=0pt,minimum size=0.50cm]{};

	\node 			at	(6,3) 	[circle,]{$:C_2$};

	\draw[->, very thick]	(g)	to	node[auto,swap]{}	(g0);

	\draw[->, thick]	(g0)	to	node[auto,swap]{$a$}	(g1);

	\draw[->, thick]	(g1)	to	node[auto,swap]{$b$}	(g2);

	\draw[->, thick]	(g2)	to	node[auto,swap]{$c$}	(g3);

	\draw[->, thick]	(g0)	to	node[auto]{$b$}	(g1');

	\draw[->, thick]	(g1')	to	node[auto]{$a$}	(g2');

	\node 	(h')		at	(10,5) 		[circle,]{};

	\node 	(h)		at	(-10,5) 	[circle,]{};

	\draw[-,thick]	(h)	to	node[auto]{}	(h');


	\node 	(f3)		at	(-6,7) 	[rectangle, thick, draw=red!50,fill=green!20,inner sep=0pt,minimum size=0.75cm]{$s_0$};

	\node 	(f4)		at	(-2,7) 	[rectangle, thick, draw=red!50,fill=green!20,inner sep=0pt,minimum size=0.75cm]{$s_1$};

	\node 	(f5)		at	(2,7) 	[rectangle, thick, draw=red!50,fill=green!20,inner sep=0pt,minimum size=0.75cm]{$s_2$};

	\node 	(f6)		at	(6,7) 	[rectangle, thick, draw=red!50,fill=green!20,inner sep=0pt,minimum size=0.75cm]{$s_3$};

	\draw[->, thick]	(f3)	to	node[auto]{$a$}	(f4);

	\draw[->, thick]	(f5)	to	node[auto]{$b$}	(f6);

	\node 	(g')		at	(-4,11) 	[circle]{};

	\node 	(g4)		at	(-4,10) 	[circle, thick, draw=red!50,fill=blue!20,inner sep=0pt,minimum size=0.40cm]{};

	\node 	(g5)		at	(-2,10) 	[circle, thick, draw=red!50,fill=blue!20,inner sep=0pt,minimum size=0.40cm]{};

	\node 	(g6)		at	(0,10) 	[circle, thick, draw=red!50,fill=blue!20,inner sep=0pt,minimum size=0.40cm]{};

	\node 			at	(0,10) 	[circle, thick, draw=black!50,inner sep=0pt,minimum size=0.50cm]{};

	\node 			at	(2,10) 	[circle,]{$:C_1$};

	\draw[->, very thick]	(g')	to	node[auto,swap]{}	(g4);

	\draw[->, thick]	(g4)	to	node[auto,swap]{$a$}	(g5);

	\draw[->, thick]	(g5)	to	node[auto,swap]{$b$}	(g6);




\end{tikzpicture}
\end{center}
\caption{Unrealizable choreographies \cite{SBFZ07}}
\label{fig:un-chor}
\end{figure*} 

Even though $C_0$ turned out to be realizable, we can easily
fashion choreographies that are not realizable \cite{BFHS}. Consider
the choreography $C_1$ with $L(C_1)=\{a\cdot b\}$, given in Figure
\ref{fig:un-chor}, where service $s_0$ sends a message $a$ to $s_1$ and
$s_2$ sends $b$ to $s_3$ . 

Note that there is no causal dependence between a send event $!a$ 
and a send event $!b$ and hence any set of service implementations
defined by projection will also admit the global behaviour $b\cdot a$.
This situation is illustrated in Figure \ref{fig:un-impl}. 

\begin{figure*}
\begin{center}
\begin{tikzpicture}[scale=0.7]


	\node 	(g')		at	(-4,11.5) 	[circle]{$s_0$};

	\node 	(g4)		at	(-4,10) 	[circle, thick, draw=red!50,fill=blue!20,inner sep=0pt,minimum size=0.40cm]{};

	\node 	(g5)		at	(-4,8) 	[circle, thick, draw=red!50,fill=blue!20,inner sep=0pt,minimum size=0.40cm]{};

	\node 			at	(-4,8) 	[circle, thick, draw=black!50,inner sep=0pt,minimum size=0.50cm]{};

	\draw[->, very thick]	(g')	to	node[auto,swap]{}	(g4);

	\draw[->, thick]	(g4)	to	node[auto,swap]{$!a$}	(g5);

	\node 	(h')		at	(-2,11.5) 	[circle]{$s_1$};

	\node 	(h4)		at	(-2,10) 	[circle, thick, draw=red!50,fill=blue!20,inner sep=0pt,minimum size=0.40cm]{};

	\node 	(h5)		at	(-2,8) 	[circle, thick, draw=red!50,fill=blue!20,inner sep=0pt,minimum size=0.40cm]{};

	\node 			at	(-2,8) 	[circle, thick, draw=black!50,inner sep=0pt,minimum size=0.50cm]{};

	\draw[->, very thick]	(h')	to	node[auto,swap]{}	(h4);

	\draw[->, thick]	(h4)	to	node[auto,swap]{$?a$}	(h5);

	\draw[->, bend right=30, very thin]	(g5)	to	node[auto,swap]{$\lambda$}	(h5);


	\node 	(g')		at	(0,11.5) 	[circle]{$s_2$};

	\node 	(g4)		at	(0,10) 	[circle, thick, draw=red!50,fill=blue!20,inner sep=0pt,minimum size=0.40cm]{};

	\node 	(g5)		at	(0,8) 	[circle, thick, draw=red!50,fill=blue!20,inner sep=0pt,minimum size=0.40cm]{};

	\node 			at	(0,8) 	[circle, thick, draw=black!50,inner sep=0pt,minimum size=0.50cm]{};

	\draw[->, very thick]	(g')	to	node[auto,swap]{}	(g4);

	\draw[->, thick]	(g4)	to	node[auto,swap]{$!b$}	(g5);

	\node 	(h')		at	(2,11.5) 	[circle]{$s_3$};

	\node 	(h4)		at	(2,10) 	[circle, thick, draw=red!50,fill=blue!20,inner sep=0pt,minimum size=0.40cm]{};

	\node 	(h5)		at	(2,8) 	[circle, thick, draw=red!50,fill=blue!20,inner sep=0pt,minimum size=0.40cm]{};

	\node 			at	(2,8) 	[circle, thick, draw=black!50,inner sep=0pt,minimum size=0.50cm]{};

	\draw[->, very thick]	(h')	to	node[auto,swap]{}	(h4);

	\draw[->, thick]	(h4)	to	node[auto,swap]{$?b$}	(h5);
	\draw[->, bend right=30, very thin]	(g5)	to	node[auto,swap]{$\lambda$}	(h5);

	\node 	(h)		at	(3,12) 	[circle]{};
	\node 	(h')		at	(3,6) 	[circle]{};
	\draw[-	, thick]	(h)	to	node[auto]{}	(h');

	\node 	(g')		at	(4.5,11.5) 	[circle]{$s_0$};

	\node 	(g4)		at	(4.5,9) [rectangle, thick, draw=red!50,fill=blue!20,inner sep=0pt,minimum size=0.20cm]{};
	\node 			at	(4,9) [circle, minimum size=0.20cm]{$!a$};

	\node 	(h')		at	(6,11.5) 	[circle]{$s_1$};

	\node 	(h4)		at	(6,9) [rectangle, thick, draw=red!50,fill=blue!20,inner sep=0pt,minimum size=0.20cm]{};
	\node 			at	(6.5,9) [circle, minimum size=0.20cm]{$?a$};

	\draw[->, thick]	(g4)	to	node[auto]{}	(h4);

	\node 	(g')		at	(8.5,11.5) 	[circle]{$s_2$};

	\node 	(g4)		at	(8.5,9) [rectangle, thick, draw=red!50,fill=blue!20,inner sep=0pt,minimum size=0.20cm]{};
	\node 			at	(8,9) [circle, minimum size=0.20cm]{$!b$};

	\node 	(h')		at	(9.5,11.5) 	[circle]{$s_3$};

	\node 	(h4)		at	(9.5,9) [rectangle, thick, draw=red!50,fill=blue!20,inner sep=0pt,minimum size=0.20cm]{};
	\node 			at	(10,9) [circle, minimum size=0.20cm]{$?b$};

	\draw[->, thick]	(g4)	to	node[auto]{}	(h4);

\end{tikzpicture}
\end{center}
\caption{Sample Implementation of Choreography $C_1$ in \ref{fig:un-chor} and its Partial Order Execution}
\label{fig:un-impl}
\end{figure*}
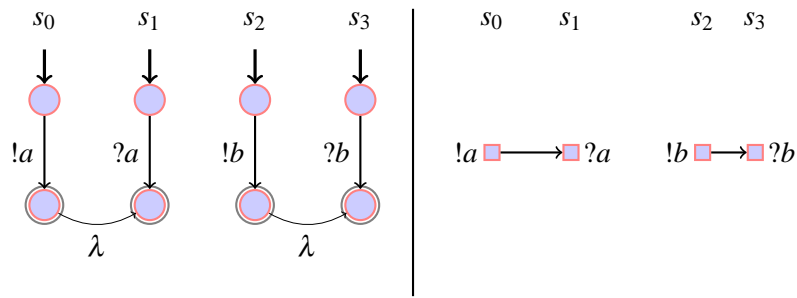


We can define choreographies where the reasons for non-realizability are
not so obvious. Consider the choreography $C_2$, with $L(C_2) =\{a \cdot
b \cdot c, b \cdot a\}$, over three services $s_4,s_5,s_6$ and messages
$a,b,c$  shown in Figure \ref{fig:un-chor}. Since every service has
a FIFO queue, it can be shown that every implementation that permits
the two sequences in $L(C_2)$ will also permit the sequence ``$b \cdot
a \cdot c$''  that is not in $C_2$ \cite{Fu}.

Note that these unrealizable choreographies can be easily specified
by formulas of a standard temporal logic such as LTL, the temporal
logic of linear time. The following formula specifies the choreography $C_1$:

\[\Diamond (!a_0^1 \land \Diamond !b_2^3) \land 
\Box (!b_2^3 \Imply \Box \lnot !a_0^1).\]

Above, we have used the natural encoding $!a_i^j$ to denote a send
event from service $i$ to service $j$. It says that send-to-$s_1$ event in $s_0$
happens before send-to-$s_4$ event in $s_3$. It comes with an extra sanity check: 
there is no  send-to-$s_1$ event in $s_0$ after send-to-$s_4$ event in $s_3$.
On sequences, this is a satisfiable formula. But as we have already discussed,
such a choreography specification is unrealizable. This forms our
motivation for considering a local temporal logic on partial orders
where such specifications are unsatisfiable.

\section{Logic}
The logical language which we use to specify choreographies is a {\bf local} temporal logic. It is named as $p$-LTL.

\subsection{Syntax and Semantics} 
We fix the set of $n$ services $Ag=\{s_1,s_2,\cdots,s_n\}$. Further, we fix countable sets of {\it propositional letters} $P_s$, for local properties of service $s\in Ag$ and $\calM$ as the countable set of message symbols. The propositional symbols in $P_s$ are intended to specify internal actions in web service $s$. We assume, for convenience, that $P_s \cap P_{s'} = \emptyset$ for $s \neq s' \in Ag$. 

The syntax of {\it $s$-local formulas}, local service formulas, is given below:
\[\alpha \in \Phi_s ::= !a_{s'}, a \in \calM \mid ?a_{s'}, a \in \calM \mid p \in P_s \mid \Not\alpha \mid \alpha_1 \lor \alpha_2 \mid \nxt \alpha \mid \Diamond\alpha \mid \ominus \alpha\]
$\nxt $ is the {\bf next}, $\Diamond$ is the {\bf eventual} and, $\ominus$ is the {\bf previous} temporal modality. $!a_{s'}$ is a send-$a$-to-$s'$ proposition in $s$ whereas $!a_{s'}$ is the corresponding receive-$a$-from-$s'$ proposition.

{\it Global formulas} are obtained by boolean combination of local formulas:
\[\psi \in \Psi ::= \alpha@s,~\alpha~\in~\Phi_s, s \in Ag\mid\Not~\psi\mid\psi_1~\lor~\psi_2\]
The propositional connectives $\land, \Imply, \equiv,\xor$ and derived 
temporal modality $\Box$ are defined as usual. In particular, 
$\Box\alpha\equiv \lnot\Diamond\lnot \alpha$. Fix $p_0 \in P_s$ and
let $\True = p_0 \lor \Not p_0$; let $\False = \Not \True$.

A choreography is a global formula $\psi \in \Psi$, that intuitively starts 
the services off in a global state, and their local dynamics and interactions are 
given by service formulas. Note that global safety properties can be specified by a
conjunction of local safety properties.

The formulas are interpreted on a class of partial orders, defined as follows. 

$M=(E_{s_1},\cdots,E_{s_n}, \le,V)$ such that:
\begin{itemize}
\item $E_{s_1},\cdots,E_{s_n}$ are finite nonempty sets of events. 
$E_{s_i}$ is the set of events associated with service $s_i$. We assume that 
there is a unique event $\bot_{s_i} \in E_{s_i}$ for each $i$.  Let 
$E =\cupover_{i \in [n]}E_{s_i}$. 

\item $\le \subseteq (E \times E)$ is a partial order. Let $\lessdot$ be
the one-step relation derived from $\le$, that is: $\le = \lessdot^*$.

Define, for each service $s_i$, $\le_{s_i} = \le \cap (E_{s_i} \times 
E_{s_i})$. It gives the local behaviour of $s_i$. We require that 
$\bot_{s_i}$ be the unique minimum event in $E_{s_i}$; that is, for 
all $e \in E_{s_i}$, we have: $\bot_{s_i} \le_{s_i} e$.  Let 
$\lessdot_{s_i}$ be the one-step relation induced by $\le_i$.

Now define, across services, $\le_c \subseteq E \times  E$ by:
$\le_c = \{(e,e') \mid, e \lessdot e', e \in E_{s_i}, e' \in E_{s_j},
i \neq j, e \neq \bot_{s_i} \}$. It gives the global communication pattern 
among the services: interpret $(e,e')$ above as $s_i$ sending a message to $s_j$ 
with $e$ being the send event and $e'$ being the corresponding receive event. 
Note that the initial event cannot be a communication event.

\item $V:E \to 2^{(P\cup \calM)}$ such that for all $e \in E_s$, $V(e)\subseteq 
(P_s \cup \calM)$ gives the label of the event $e$, that is, the propositions 
that hold after the execution of the event $e$ and the messages that have been 
sent or received. 
\end{itemize}

Given a Lamport diagram $M=(E_{s_1},\cdots,E_{s_n}, \le,V)$, we can define 
the set of all configurations (global states) of $M$ as $\calC_M \subseteq 
E_{s_1} \times \cdots \times E_{s_n}$ such that every 
$c=(e_1,\cdots, e_n) \in \calC_M$ satisfies the following consistency property:
$\forall i,j, \forall e \in E_j ~\mbox{if}~ e \le e_i ~\mbox{then}~ e \le e_j$.
Thus each configuration is a tuple of local states of services. Note that
there is a unique initial global configuration $(\bot_1, \ldots, \bot_n) 
\in \calC_M$.

Let $\alpha \in \Phi_s$ and $e \in E_s$. The notion that $\alpha$ {\it is 
true at the event} $e$ of service $s$ in model $M$ is denoted 
$M, e \models_s \alpha$, and is defined inductively as follows:

\begin{itemize}

 \item $M, e \models_s p$ iff $p \in V(e)$.

 \item $M, e \models_s !a_{s'}$ iff $\exists e' \in E_{s'}$ such that $(e,e') \in <_c$ and $a \in V(e)$.

 \item $M, e \models_s ?a_{s'}$ iff $\exists e' \in E_{s'}$ such that $(e',e) \in <_c$ and $a \in V(e)$.

 \item $M, e \models_s \Not \alpha$ iff $M, e \not\models_s \alpha$. 

 \item $M, e \models_s \alpha \lor \alpha'$ iff  $M, e \models_s \alpha$ or $M, e \models_s \alpha'$.

 \item $M, e \models_s \nxt \alpha$ iff there exists $e' \in E_s$ such that $e \lessdot_s e'$ and $M, e' \models_s \alpha$.

 \item $M, e \models_s \Diamond \alpha$ iff $\exists e' \in E_s$: $e \le_s e', M, e' \models_s \alpha$.

 \item $M, e \models_s \ominus \alpha$ iff there exists $e' \in E_s$ such that $e' \lessdot_s e$ and $M, e' \models_s \alpha$. 

\end{itemize}

When the send proposition $!a_{s'}$ holds at $e$ in service $s$, it means 
there is a corresponding receive event $e'$ in service $s'$ ($(e,e') \in <_c$) and $a$ holds 
locally in $e$. Similarly, when the receive proposition $?a_{s'}$ holds 
at $e$ in service $s$, it means there is a corresponding send event $e'$ 
in service $s'$ ($(e'e)\in <_c$) and $a$ holds locally in $e$.

Also, when $\nxt \alpha$ holds at $e$ in service $s$, it means that $\alpha$
holds at $e'$, the one-step successor of $e$. Similarly, when $\ominus \alpha$ 
holds at $e$ in service $s$, it means that $\alpha$ holds at $e'$, the one-step predecessor of $e$.
Clearly, we see that $M, e \models_s \ominus \False$ iff $e = \bot_s$.
Further, when $\Diamond \alpha$ holds at $e$ in service $s$, it means that $\alpha$
holds at $e'$, a descendant of $e$.

For every global state $c=(e_1,\cdots, e_n) \in \calC_M$ and global formula 
$\psi \in \Psi$, we define global satisfiability $M,c \models \psi$  ($\psi$ 
{\it is true at configuration} $c$ of the model $M$) inductively as follows:

\begin{itemize}

 \item $M, c \models \alpha@s_i$ iff $M, e_{i} \models_{s_i} \alpha$.

 \item $M, c \models \Not\psi$ iff $M, c \not\models \psi$.

 \item $M, c \models \psi_1 \lor \psi_2$ iff $M, c \models \psi_1$ or $M, c \models \psi_2$.

\end{itemize}

Given a choreography $\psi$ in $p$-LTL we define the set $Models(\psi)$ as all 
the Lamport diagrams $M$ such that $M,c_0 \models \psi$, where $c_0$ is the
unique initial global configuration of $M$.


\subsection{Choreography Examples}

The simplest choreography which can be encoded using $p$-LTL is that of producer-consumer protocol. This protocol describes the behaviour of two services, producer ($p$) and consumer ($c$), in which $p$ produces objects labelled $a$ which are consumed by $c$. The objects produced by $p$ are put into a FIFO buffer from where $c$ retrieves them. Clearly, the protocol can be modelled as an asynchronous message passing system, where $p$ sends messages labelled $a$ to $c$. Depending on the size of buffer, there are various patterns of messages exchanged between $p$ and $c$. Figure \ref{fig:ld-pc} gives the scenarios for the cases where buffer size is $1$, $2$ and $3$.

The choreography for producer-consumer protocol may be formulated as $\psi$ where $\psi \defn \Box !a_c @ p \land \Box ?a_p @c$. It can be seen that Lamport diagrams in \ref{fig:ld-pc} are legitimate models of $\psi$.

Let us consider another choreography example. This concerns a system comprising three services: a traveller ($T$) and map providers ($M_1$ and $M_2$). The GPS device of traveller ($T$) has to automatically negotiate a purchase agreement with one of the two map providers. After $T$ has already broadcast a ``request of bid'' message, the two services $M_1$ and $M_2$ send their respective bids. $T$ evaluates the two bids and accepts one.

The message set is fixed as $\calM=\{bid,bid',acc,rej\}$. We assume $rej\equiv \lnot acc$. The choreography can be formulated as $\psi\defn \alpha @ M_1\land \beta @ M_2 \land \gamma @ T$ and:
 
\begin{itemize}
\item $\alpha \defn \Box \big(!bid_T \Imply \Diamond (?acc_T \lor ?rej_T)\big)$

\item $\beta \defn  \Box \big(!bid'_T \Imply \Diamond (?acc_T \lor ?rej_T)\big)$

\item $\gamma \defn \Box \big((?bid_{M_1} \Imply \Diamond (!acc_{M_1} \lor !rej_{M_1})) \land (?bid_{M_2} \Imply \Diamond (!acc_{M_2} \lor !rej_{M_2})) \land (\Diamond !acc_{M_1} \Imply  \Diamond !rej_{M_2})\land (\Diamond !acc_{M_2} \Imply  \Diamond !rej_{M_1})\big)$
\end{itemize}

We briefly explain the local formulas: $\alpha$ says that when a bid is send to $T$ (by $M_1$), it eventually receives either an acceptance or rejection. $\beta$ says the same for $M_2$. $\gamma$ says two things: when $T$ receives a bid from $M_1$ ($M_2$) it either accepts or rejects it and, exactly one of the bids ($bid$ or $bid'$) is accepted.

The logical formalism which we have introduced in this section can not specify unrealizable choreographies of the kind mentioned in the previous section. Further, it is expected that software designers will not learn to write such formulas but will use tools that work with graphical formalisms and generate specifications interactively.

\section{System of Communicating Automata}
The service implementations for choreographies are given in terms of Systems of Communicating Automata (SCA). SCAs are quite similar to the automata model introduced in \cite{MR}.

We fix $n > 0$ and focus our attention on $n$-service systems. Let $[n]=\{1,2,\cdots,n\}$. A {\bf distributed alphabet} for such systems is an $n$-tuple $\DA$, where for each $i \in [n]$, $\Sigma_i$ is a finite non-empty alphabet of actions of service $i$ and for all $i \neq j$, $\Sigma_i \cap \Sigma_j = \emptyset$. The alphabet induced by $\DA$ is given by $\Sigma = \cupover_i \Sigma_i$. The set of {\bf system actions} is the set $\Sigma' = \{\lambda\} \cup \Sigma$. The action symbol $\lambda$ is referred to as the {\bf communication action}. This is used as an action representing a communication constraint through which every receive action will be dependent on its corresponding send action. We use $a, b, c$ {\it etc.}, to refer to elements of $\Sigma$ and $\tau, \tau'$ {\it etc.}, to refer to those of $\Sigma'$.

\begin{dfn} 
A {\bf System of $n$ Communicating Automata (SCA)} on a distributed alphabet 
\newline
$\DA$ is a tuple $S = ((Q_1,,F_1), \ldots, (Q_n,F_n), \to, Init)$ where,

\begin{enumerate}

  \item For each $j \in [n]$, $Q_j$ is a finite set of (local) states of service $j$. 
  \newline 	
  For $j \neq j'$, $Q_j \cap Q_{j'} = \emptyset$. 

  \item for each $j \in [n]$, $F_j \subseteq Q_j$ is the set of
  (local) final states of service $j$.

  \item Let $Q = \cupover_j Q_j$, then, the transition relation $\to$
  is defined over $Q$ as follows. $\to \subseteq (Q \times \Sigma' 
  \times Q)$ such that if $q \step{\tau} q'$ then either there exists 
  $j$ such that $\{q, q'\} \subseteq Q_j$ and $\tau \in \Sigma_j$, or 
  there exist $j \neq j'$ such that $q \in Q_j, q' \in Q_{j'}$ and 
  $\tau = \lambda$. 


  \item $Init \subseteq (Q_1 \times \cdots \times Q_n)$ is 
the set of global initial states of the system.
  

\end{enumerate}
\end{dfn}

Thus, SCAs are systems of $n$ finite state automata with $\lambda$-labelled
communication constraints between them. The only `global' specification
is on initial states. This is in keeping with design of choreographies:
the services are `set up' and once initiated, manage themselves without
global control.

Note that $\to$ above is {\em not} 
a global transition relation, it consists of {\bf local transition 
relations}, one for each service, and {\bf communication constraints} 
of the form $q \step{ \lambda} q'$, where $q$ and $q'$ are states of 
different services. The latter define a coupling relation rather than 
a transition. The interpretation of local transition relations is standard: 
when the service $i$ is in state $q_1$ and reads input $a \in \Sigma_i$, it 
can move to a state $q_2$ and be ready for the next input if $(q_1, a, q_2) 
\in \to$. 

The interpretation of communication constraints is non-standard 
and depends only on automaton states, not on local input. When $q \step{ \lambda}
q'$, where $q \in Q_i$ and $q' \in Q_j$, it constrains the system 
behaviour as follows: whenever service $i$ is in state $q$, it puts
a message whose content is $q$ and intended recipient is $j$ into the 
buffer; whenever service $j$ intends to enter state $q'$, it checks its 
environment to see if a message of the form $q$ from $i$ is available for 
it, and waits indefinitely otherwise. If a system $S$ has no $\lambda$ 
constraints at all, automata  proceed asynchronously and do not 
wait for each other. We will refer to $\lambda$-constraints as 
`$\lambda$-transitions' in the sequel for uniformity, but this 
explanation (that they are constraints not dependent on local input) should
be kept in mind. 

\begin{figure*}
\begin{center}
\begin{tikzpicture}[scale=.5]
	\node 	(e0)		at	(-5,5) 	[rectangle, thick]{$p$};

	\node 	(e1)		at	(-5,4) 	[rectangle, thick, draw=blue!50,fill=green!20,inner sep=0pt,minimum size=.35cm]{$e_1$};
	\node 			at	(-6,4) 	[rectangle, thick, inner sep=0pt,minimum size=.35cm]{$!a$};

	\node 	(f0)		at	(-1,5) 	[rectangle, thick]{$c$};

	\node 	(f1)		at	(-1,4) 	[rectangle, thick, draw=blue!50,fill=green!20,inner sep=0pt,minimum size=.35cm]{$f_1$};
	\node 			at	(0,4) 	[rectangle, thick, inner sep=0pt,minimum size=.35cm]{$?a$};

	\node 			at	(-3,-2) [rectangle, thick]{$(i)$};

	\draw[->, very thick]	(e1)	to	node[auto]{}	(f1);

	\node 	(h)		at	(1,6) 	[rectangle, thick]{};
	\node 	(h')		at	(1,-1) 	[rectangle, thick]{};
	\draw[-, thick]	(h)	to	node[auto]{}	(h');

	\node 	(e0)		at	(3,5) 	[rectangle, thick]{$p$};

	\node 	(e1)		at	(3,4) 	[rectangle, thick, draw=blue!50,fill=green!20,inner sep=0pt,minimum size=.35cm]{$e_1$};
	\node 			at	(2,4) 	[rectangle, thick, inner sep=0pt,minimum size=.35cm]{$!a$};

	\node 	(e2)		at	(3,2) 	[rectangle, thick, draw=blue!50,fill=green!20,inner sep=0pt,minimum size=.35cm]{$e_2$};
	\node 			at	(2,2) 	[rectangle, thick, inner sep=0pt,minimum size=.35cm]{$!a$};

	\draw[->, thick]	(e1)	to	node[auto]{}	(e2);

	\node 	(f0)		at	(7,5) 	[rectangle, thick]{$c$};

	\node 	(f1)		at	(7,4) 	[rectangle, thick, draw=blue!50,fill=green!20,inner sep=0pt,minimum size=.35cm]{$f_1$};
	\node 			at	(8,4) 	[rectangle, thick, inner sep=0pt,minimum size=.35cm]{$?a$};

	\node 	(f2)		at	(7,2) 	[rectangle, thick, draw=blue!50,fill=green!20,inner sep=0pt,minimum size=.35cm]{$f_2$};
	\node 			at	(8,2) 	[rectangle, thick, inner sep=0pt,minimum size=.35cm]{$?a$};

	\draw[->, thick]	(f1)	to	node[auto]{}	(f2);

	\node 			at	(5,-2) [rectangle, thick]{$(ii)$};

	\draw[->, very thick]	(e1)	to	node[auto]{}	(f1);
	\draw[->, very thick]	(e2)	to	node[auto]{}	(f2);

	\node 	(h)		at	(9,6) 	[rectangle, thick]{};
	\node 	(h')		at	(9,-1) 	[rectangle, thick]{};
	\draw[-, thick]	(h)	to	node[auto]{}	(h');

	\node 	(e0)		at	(11,5) 	[rectangle, thick]{$p$};

	\node 	(e1)		at	(11,4) 	[rectangle, thick, draw=blue!50,fill=green!20,inner sep=0pt,minimum size=.35cm]{$e_1$};
	\node 			at	(10,4) 	[rectangle, thick, inner sep=0pt,minimum size=.35cm]{$!a$};

	\node 	(e2)		at	(11,2) 	[rectangle, thick, draw=blue!50,fill=green!20,inner sep=0pt,minimum size=.35cm]{$e_2$};
	\node 			at	(10,2) 	[rectangle, thick, inner sep=0pt,minimum size=.35cm]{$!a$};

	\node 	(e3)		at	(11,0) 	[rectangle, thick, draw=blue!50,fill=green!20,inner sep=0pt,minimum size=.35cm]{$e_3$};
	\node 			at	(10,0) 	[rectangle, thick, inner sep=0pt,minimum size=.35cm]{$!a$};

	\draw[->, thick]	(e1)	to	node[auto]{}	(e2);
	\draw[->, thick]	(e2)	to	node[auto]{}	(e3);

	\node 	(f0)		at	(15,5) 	[rectangle, thick]{$c$};

	\node 	(f1)		at	(15,4) 	[rectangle, thick, draw=blue!50,fill=green!20,inner sep=0pt,minimum size=.35cm]{$f_1$};
	\node 			at	(16,4) 	[rectangle, thick, inner sep=0pt,minimum size=.35cm]{$?a$};

	\node 	(f2)		at	(15,2) 	[rectangle, thick, draw=blue!50,fill=green!20,inner sep=0pt,minimum size=.35cm]{$f_2$};
	\node 			at	(16,2) 	[rectangle, thick, inner sep=0pt,minimum size=.35cm]{$?a$};

	\node 	(f3)		at	(15,0) 	[rectangle, thick, draw=blue!50,fill=green!20,inner sep=0pt,minimum size=.35cm]{$f_3$};
	\node 			at	(16,0) 	[rectangle, thick, inner sep=0pt,minimum size=.35cm]{$?a$};

	\draw[->, thick]	(f1)	to	node[auto]{}	(f2);
	\draw[->, thick]	(f2)	to	node[auto]{}	(f3);

	\node 			at	(13,-2) [rectangle, thick]{$(iii)$};

	\draw[->, very thick]	(e1)	to	node[auto]{}	(f1);
	\draw[->, very thick]	(e2)	to	node[auto]{}	(f2);
	\draw[->, very thick]	(e3)	to	node[auto]{}	(f3);

\end{tikzpicture}
\end{center}
\caption{Lamport diagrams of the producer-consumer protocol}
\label{fig:ld-pc}
\end{figure*}
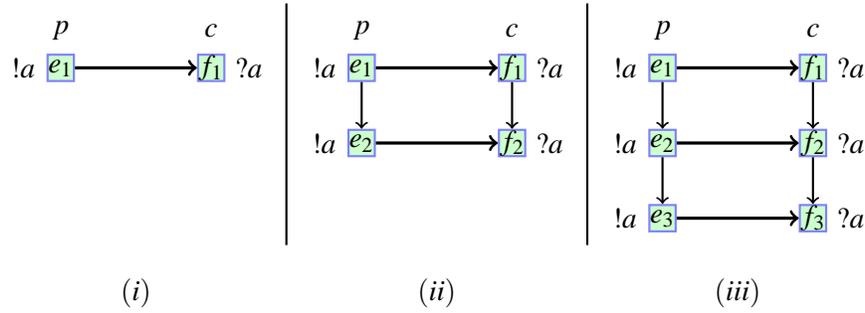

We  use the notation $\presca{q} \defn \{q' \mid q' \step{\lambda}
q\}$ and $\postsca{q} \defn \{q' \mid q \step{\lambda} q'\}$. For $q \in Q$, 
the set $\presca{q}$ refers to the set of all states from which $q$ has 
incoming $\lambda$-transitions and the set $\postsca{q}$ is the set of all 
states to which $q$ has outgoing $\lambda$-transitions. 
The {\em global behaviour} of an SCA will be defined using its set of 
{\bf global states} $\widetilde{Q}  = Q_1 \times \cdots \times Q_n $. 
When $\widetilde{q} = (q_1, \dots, 
q_n) \in \widetilde{Q}$, we use the notation $\widetilde{q}[i]$ to refer to $q_i$.
The language accepted by an SCA is a collection of ($\Sigma$-labelled) 
Lamport diagrams, to be defined below.

Figure~\ref{fig:sca-eg} gives an SCA over the alphabet $\Sigtil = 
(\{!a\}, \{?a\})$. 
The (global) initial state of this SCA is $\{(q_0,q_0')\}$ and the (global) final state is $\{(q_2,q_2')\}$
The reader will observe that this SCA models the producer-consumer protocols given in Figure \ref{fig:ld-pc}.

The producer generates the first object via the $q_0\step{!a}q_1$ transition, any number of objects via the  $q_1\step{!a}q_1$ transition, and the last object via the $q_1\step{!a}q_2$ transition. The consumer consumes the first object via the $q_0'\step{?a}q_1'$ transition, any number of objects via the  $q_1'\step{?a}q_1'$ transition, and the last object via the $q_1'\step{?a}q_2'$ transition. As a consumption can follow only after a production there is a $\lambda$ transition between $q_0$ and $q_1'$ and  $q_0$ and $q_2'$ and also between $q_1$ and $q_2'$. 

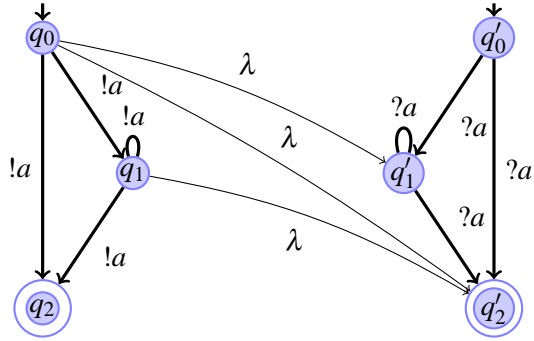
\begin{figure*}
\begin{center}
\begin{tikzpicture}[scale=0.30]
	\node 	(e0)		at	(0,2) 	[rectangle, thick]{};
	\node 	(s0)		at	(0,0) 	[circle, thick, draw=blue!50,fill=blue!20,inner sep=0pt,minimum size=.3cm]{$q_0$};

	\node 	(s1)		at	(4,-6) 	[circle, thick, draw=blue!50,fill=blue!20,inner sep=0pt,minimum size=.3cm]{$q_1$};

	\node 			at	(0,-12) 	[circle, thick, draw=blue!50,fill=blue!20,inner sep=0pt,minimum size=.3cm]{$q_2$};
	\node 	(s2)		at	(0,-12) 	[circle, thick, draw=blue!50,inner sep=0pt,minimum size=.75cm]{};

	\draw[->,very thick] 		(e0) to 	node[auto]{} 			(s0);
	\draw[->,very thick] 		(s0) to 	node[auto]{$!a$} 		(s1);
	\draw[->,very thick] 		(s0) to 	node[auto,swap]{$!a$} 		(s2);
	\draw[->,very thick] 		(s1) to 	node[auto]{$!a$} 		(s2);
	\draw[loop above,very thick] 	(s1) to 	node[auto]{$!a$} 		(s1);

	\node 	(f0)		at	(20,2) 	[rectangle, thick]{};

	\node 	(t0)		at	(20,0) 	[circle, thick, draw=blue!50,fill=blue!20,inner sep=0pt,minimum size=.3cm]{$q_0'$};

	\node 	(t1)		at	(16,-6) 	[circle, thick, draw=blue!50,fill=blue!20,inner sep=0pt,minimum size=.3cm]{$q_1'$};

	\node 			at	(20,-12) 	[circle, thick, draw=blue!50,fill=blue!20,inner sep=0pt,minimum size=.3cm]{$q_2'$};
	\node 	(t2)		at	(20,-12) 	[circle, thick, draw=blue!50,inner sep=0pt,minimum size=.75cm]{};

	\draw[->,very thick] 		(f0) to 	node[auto]{} 			(t0);
	\draw[->,very thick] 		(t0) to 	node[auto]{$?a$} 		(t1);
	\draw[->,very thick] 		(t0) to 	node[auto]{$?a$} 		(t2);
	\draw[->,very thick] 		(t1) to 	node[auto]{$?a$} 		(t2);
	\draw[loop above,very thick] 	(t1) to 	node[auto]{$?a$} 		(t1);
	\draw[->,thin, bend left=10]	(s1)	to	node[auto,swap]{$\lambda$}	(t2);
	\draw[->,thin, bend left=5]	(s0)	to	node[auto]{$\lambda$}	(t2);
	\draw[->,thin, bend left=10]	(s0)	to	node[auto]{$\lambda$}	(t1);

\end{tikzpicture}
\end{center}
\caption{A simple SCA}
\label{fig:sca-eg}
\end{figure*}

\subsection{Poset language of an SCA}
We now formally define the run of an SCA on its input, a Lamport diagram 
and the poset language accepted by an SCA as the collection of Lamport 
diagrams on which the SCA has an accepting run. 

Given an SCA $S$ on $\Sigtil$, a {\bf run} of $S$ on a Lamport
diagram $D=(E_1,\cdots,E_n, \le,V)$ is a map $\rho : \calC_D \rightarrow 
\widetilde{Q}$, $V:E \to \Sigma$, 
such that the following conditions are satisfied:

\begin{itemize}

  \item $\rho((\bot_1,\cdots,\bot_n)) \in Init$. 

  \item For $c \in \calC_D$, suppose $\rho(c) = (q_1, q_2, \ldots, q_n)$. 
  Consider $c'\in \calC_D$, such that $c$ differs from $c'$ only at the $i$th 
position. Let $e \in E_i$ be the $i$th element in $c'$ and 
  $V(e) = \sigma\in \Sigma_i$. Then,

  \begin{itemize}

   \item $\rho(c') = (q'_1, q'_2, \ldots, q'_n)$ where $q'_j = q_j$ for all 
   $j \neq i$ and $q_i \step{\sigma} q'_i$ in $S$. 

   \item Suppose $\exists e'\in E_j$, $j\ne i$ such that $e'<_c e$. Let $f' \in E_j'$ such that $f' \lessdot_j e'$.  Let $c_0,c_1 \in \calC_D'$  such that the $j$th element in $c_0$ be $f'$ and in $c_1$ be $e'$ whereas all the other elements are the same. Then,
	 $\rho(c_0)[j]\longstep{V(e')}\rho(c_1)[j]$ and $\rho(c_0)[j]\longstep{\lambda}\rho(c')[i]$. Remove $\rho(c_0)[j]$ from the front of 		 FIFO queue of $j$ for $i$, if it is there else block.

   \item If $\postsca{q_i} \cap Q_j \neq \emptyset$, then, there exists 
   $e' \in E_j$ such that $e \lessdot e'$. Insert $q_i$ in the FIFO queue for $j$.

  \end{itemize}

\end{itemize}

Thus, a run of $S$ on $D$ is a map from the set $\calC_D$ of configurations 
of $D$ to the set of global states of $S$ such that the following conditions
hold: If $c'$ is a configuration obtained by adding an event $e \in E_i$
(where $V(e)=\sigma$) to a configuration $c$ then, there is a transition 
on $\sigma$ from the local state of service $i$ in $\rho(c)$ to the local state of 
the same service in $\rho(c')$ and all other local states are unaltered. In 
addition, if $e$ is a receive event, we ensure that the corresponding 
send event has already occurred and that there is a $\lambda$-constraint into
the resulting state. When there are out-going $\lambda$-constraints
from the target state of the enabling transition, note that the definition makes sure that the corresponding event
 $e$ is a send event and that it has a matching receive event. 

Now, we specify the acceptance condition for a run $\rho:\calC_D\rightarrow \widetilde{Q}$ of $S$ over $D$ . Let $c=(e_1,\cdots,e_n) \in \calC_D$ such that for each $i \in [n]$, $e_i$ is the $i$-maximal event in $D$. The run $\rho$ is said to be {\bf accepting} if for each $i \in [n]$, $\rho(c)[i] \in F_i$.
The {\bf poset language accepted by $S$} is denoted by $\posetlang{S}$ and 
is defined as:
$\posetlang{S} \defn \{ D \mid D ~\mbox{is a Lamport diagram and $S$ has an accepting run on $D$} \}.$ 


For example, Figure ~\ref{fig:run-eg} gives a run of the SCA in Figure ~\ref{fig:sca-eg}
over the Lamport diagram $(iii)$ of producer-consumer problem given in Figure ~\ref{fig:ld-pc}.
The figure essentially gives the directed acyclic graph corresponding to the 
configuration space of the Lamport diagram. Each node (configuration) has an associated 
state label given in shaded boxes on the right. 

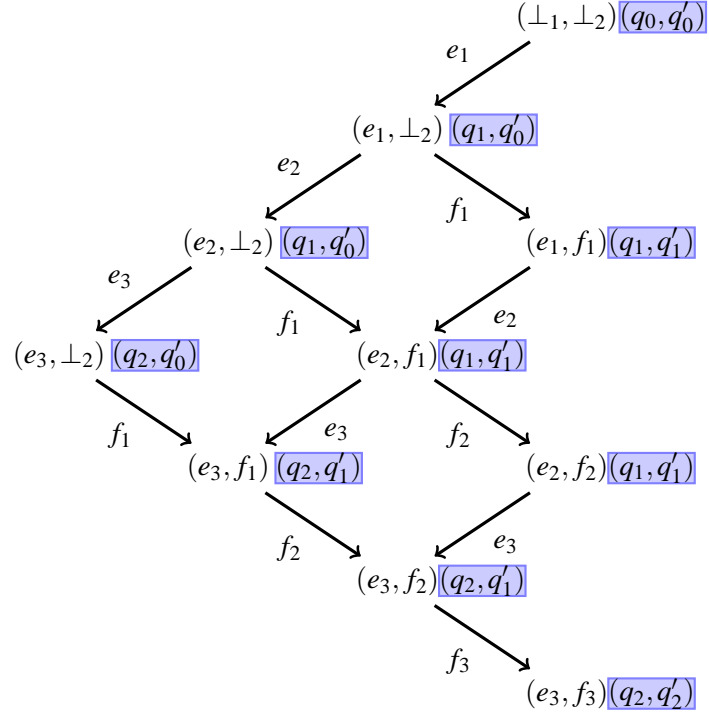
\begin{figure*}
\begin{center}
\begin{tikzpicture}[scale=0.75]
	\node 	(n0)		at	(3,2) 	[rectangle]{$(\bot_1,\bot_2)$};
	\node 			at	(4.7,2) [rectangle, thick, draw=blue!50,fill=blue!20,inner sep=0pt,minimum size=.4cm]{$(q_0,q_0')$};

	\node 	(n1)		at	(0,0) 	[rectangle]{$(e_1,\bot_2)$};
	\node 			at	(1.7,0) [rectangle, thick, draw=blue!50,fill=blue!20,inner sep=0pt,minimum size=.4cm]{$(q_1,q_0')$};

	\node 	(n21)		at	(-3,-2) [rectangle]{$(e_2,\bot_2)$};
	\node 			at	(-1.3,-2) [rectangle, thick, draw=blue!50,fill=blue!20,inner sep=0pt,minimum size=.4cm]{$(q_1,q_0')$};

	\node 	(n22)		at	(3,-2) 	[rectangle]{$(e_1,f_1)$};
	\node 			at	(4.5,-2) [rectangle, thick, draw=blue!50,fill=blue!20,inner sep=0pt,minimum size=.4cm]{$(q_1,q_1')$};

	\node 	(n31)		at	(-6,-4) [rectangle]{$(e_3,\bot_2)$};
	\node 			at	(-4.3,-4) [rectangle, thick, draw=blue!50,fill=blue!20,inner sep=0pt,minimum size=.4cm]{$(q_2,q_0')$};

	\node 	(n32)		at	(0,-4)  [rectangle]{$(e_2,f_1)$};
	\node 			at	(1.5,-4)  [rectangle, thick, draw=blue!50,fill=blue!20,inner sep=0pt,minimum size=.4cm]{$(q_1,q_1')$};

	\node 	(n41)		at	(-3,-6) [rectangle]{$(e_3,f_1)$};
	\node 			at	(-1.4,-6) [rectangle, thick, draw=blue!50,fill=blue!20,inner sep=0pt,minimum size=.4cm]{$(q_2,q_1')$};

	\node 	(n42)		at	(3,-6) 	[rectangle]{$(e_2,f_2)$};
	\node 			at	(4.5,-6) [rectangle, thick, draw=blue!50,fill=blue!20,inner sep=0pt,minimum size=.4cm]{$(q_1,q_1')$};

	\node 	(n5)		at	(0,-8) [rectangle]{$(e_3,f_2)$};
	\node 			at	(1.5,-8) [rectangle, thick, draw=blue!50,fill=blue!20,inner sep=0pt,minimum size=.4cm]{$(q_2,q_1')$};

	\node 	(n52)		at	(3,-10) [rectangle]{$(e_3,f_3)$};
	\node 			at	(4.5,-10) [rectangle, thick, draw=blue!50,fill=blue!20,inner sep=0pt,minimum size=.4cm]{$(q_2,q_2')$};


	\draw[->,very thick] 		(n0) to node[auto,swap]{$e_1$} 		(n1);

	\draw[->,very thick] 		(n1) to node[auto,swap]{$e_2$} 		(n21);
	\draw[->,very thick] 		(n1) to node[auto,swap]{$f_1$} 		(n22);

	\draw[->,very thick] 		(n21) to node[auto,swap]{$e_3$} 	(n31);
	\draw[->,very thick] 		(n21) to node[auto,swap]{$f_1$} 		(n32);
	\draw[->,very thick] 		(n22) to node[auto]{$e_2$} 		(n32);

	\draw[->,very thick] 		(n31) to node[auto,swap]{$f_1$} 	(n41);
	\draw[->,very thick] 		(n32) to node[auto]{$e_3$} 		(n41);
	\draw[->,very thick] 		(n32) to node[auto,swap]{$f_2$} 		(n42);

	\draw[->,very thick] 		(n42) to node[auto]{$e_3$} 		(n5);
	\draw[->,very thick] 		(n41) to node[auto,swap]{$f_2$} 		(n5);

	\draw[->,very thick] 		(n5) to node[auto,swap]{$f_3$} 		(n52);


\end{tikzpicture}
\end{center}
\caption{The run of SCA  in \ref{fig:sca-eg} over Lamport diagram $(iii)$ in \ref{fig:ld-pc}}
\label{fig:run-eg}
\end{figure*}

\section{Realizability Algorithm}
We now formulate the realizability problem for web service choreography in
our setting and show that it is decidable. We do this by the so-called 
automata-theoretic approach of model checking. A composite web service 
implementation $I$ is modelled as an SCA $S$ and a choreography $C$ is 
given by a formula $\psi$ in $p$-LTL. Given a choreography $\psi$, the 
realizability problem is to check if there exists a composite web service 
implementation $S$ that conforms to the choreography $\psi$ i.e, to check if 
the global ``behaviour'' of $S$ ``satisfies'' $\psi$. In order to do this, we give 
an algorithm to construct the system $S_{\psi}$ accepting the models of $\psi$.

\subsection{Formula Automaton for $p$-LTL}
In this section we show that one can effectively associate an SCA $S_{\psi}$ 
with each $p$-LTL choreography $\psi$ in such a way that the global behaviour 
of $S_{\psi}$ satisfies the formula $\psi$: that is, 
$\posetlang{S_{\psi}}= \it{Models}(\psi)$.

Given $\psi$, we first define a subformula closure set $CL_s$ for each 
$s \in Ag$. This set $CL_s$ of agent $s$ is used to define the local states 
of that agent. Given a global formula $\psi$, the set $CL(\psi)$ and $CL_s$ 
for $s \in Ag$, are defined by simultaneous induction to be the least set
of formulas such that:
\begin{enumerate}

  \item $\psi \in CL(\psi)$.

  \item if $\alpha@s \in CL(\psi)$ then  $\alpha \in CL_s$.

  \item $\{!a_{s'}, ?a_{s'}\} \subseteq CL_s$, for each 
$s' \in Ag$, $s \neq s'$.

  \item $True \in CL_s$; we take $\lnot True$ as $False$.  
$\nxt \False \in CL_s$ and $\ominus \False \in CL_s$.
  
  \item $\psi' \in CL(\psi)$ iff $\Not\psi' \in CL(\psi)$, taking 
$\Not\Not \psi$ to be $\psi$.  $\alpha \in CL_s$ iff $\Not\alpha \in CL_s$, 
taking $\Not\Not \alpha$ to be $\alpha$.

  \item if $\psi_1 \lor \psi_2 \in CL(\psi)$ then $\psi_1, \psi_2 \in CL(\psi)$.
  if $\alpha_1 \lor \alpha_2 \in CL_s$ then $\alpha_1, \alpha_2 \in CL_s$.

  \item if $\nxt \alpha \in CL_s$ then $\alpha \in CL_s$.

  \item if $\Diamond \alpha \in CL_s$ then $\alpha$, $\nxt(\Diamond \alpha) \in CL_s$.

  \item if $\ominus \alpha \in CL_s$ then $\alpha \in CL_s$.

\end{enumerate}

It can be checked that $|CL(\psi)|$ and each $|CL_s|$ are linear in the 
size of $\psi$. For the rest of this section, fix a global formula 
$\psi_0 \in \Psi$. 
We will refer to $CL(\psi_0)$ simply as $CL$ and $CL_s$ will refer 
to the associated sets of $s$-local formulas.  
We also use $U_s \defn \{\Diamond\alpha \mid \Diamond\alpha \in CL_s\}$.

We say that $A \subseteq CL_s$ is an {\it $s$-atom} iff it is locally
consistent, that is, it contains $\True$ and:

\begin{enumerate}

  \item for every formula $\alpha \in CL_s$, either $\alpha \in A$ or $\Not \alpha \in A$ but not both.

  \item for every formula $\alpha \lor \alpha' \in CL_s$, $\alpha \lor \alpha' \in A$ iff $\alpha \in A$ or $\alpha' \in A$.

  \item for every formula $\Diamond \alpha \in CL_s$, $\Diamond\alpha\in A$ iff $\alpha\in A$ or $\nxt(\Diamond\alpha) \in A$. 



  \item if $\ominus \False \in A$ then for every $\ominus \beta \in CL_s$, $\ominus \beta \not \in A$.

\end{enumerate}
An $s$-atom $A$ is said to be {\em initial} if $\ominus False \in A$

Let $AT_s$ denote the set of all $s$-atoms. Let $AT \defn \cupover_s AT_s$. Let $\widetilde{AT}$ denote the set $AT_1 \times \ldots
\times AT_n$. We let $\widetilde{A}, \widetilde{B}$ {\it etc.}, to range over $\widetilde{AT}$, and $\widetilde{A}[s]$ to denote the $s$-atom in the tuple.

Let $\psi$ be a global formula. We define the notion $\psi \in \widetilde{A}$ as follows: 
\begin{enumerate}
\item for every $s \in Ag$, for every $\alpha \in CL_s$, $\alpha@s \in \widetilde{A}$ iff $\alpha \in \widetilde{A}[s]$; 
\item for every $\Not \psi \in CL$, $\Not \psi \in \widetilde{A}$ iff $\psi \not\in \widetilde{A}$; 
\item for every $\psi_1 \lor \psi_2 \in CL$, $\psi_1 \lor \psi_2 \in \widetilde{A}$ iff $\psi_1 \in \widetilde{A}$ or $\psi_2 \in \widetilde{A}$.
\end{enumerate}

Given atoms $A, A' \in AT_s$, define the {\bf local} relation $\Funnyto_\ell$ as follows: $A \Funnyto_\ell A'$ if and only if

\begin{enumerate}
\item for every $\nxt \alpha \in CL_s$, $\nxt \alpha \in A$ iff $\alpha \in A'$.
\item for every $\ominus \alpha \in CL_s$, $\ominus \alpha \in A'$ iff $\alpha \in A$.
\end{enumerate}


The communication constraints are defined as follows: consider atoms $A \in AT_s$ and $B \in AT_{s'}$; define the {\bf communication} relation $\Funnyto_\lambda$ as follows. $A \Funnyto_\lambda B$ if and only if

\begin{enumerate} 

\item There exists $A' \in AT_s$ such that $A' \Funnyto_\ell A$; 

\item There exists $B' \in AT_{s'}$ such that $B' \Funnyto_\ell B$;

\item $!a_{s'} \in A'$, and $?a_{s} \in B$.
 
\end{enumerate}

We define local states for agent $s$ as $Q_s=AT_s \times U_s$. 
%
We use $\widetilde{X}$, $\widetilde{Y}$, to represent members of $\widetilde{Q}$, and $\widetilde{X}(A)[s]$,  $\widetilde{X}(u)[s]$ {\it etc.}, to denote the elements of the tuple in the $s^{th}$ component.


We are now ready to associate an SCA with the given formula. For $s \in  Ag$, $\Sigma_s \defn 2^{P_s\cup \calM}$ constitute the distributed alphabet over which the SCA is defined. 

\begin{dfn}
Given any formula $\psi_0$, the {\bf SCA associated with} $\psi_0$ is defined by: 

\[ S_{\psi_0} \defn ((Q_{s_1},F_{s_1}), \ldots, (Q_{s_n},F_{s_n}), \to, Init) \] where:
 
\begin{enumerate}

  \item $Q_s = \{(A,u) \mid A\in AT_s, u \subseteq U_s\}$ . 


  \item $F_s = \{(A,u)\in Q_s \mid\nxt False \in A, u = \emptyset\}$.

  \item $Init = \{((A_1, \emptyset), \ldots, (A_n,\emptyset)) \mid \psi_0 \in 
(A_1, \ldots, A_n)$, and $A_s$ is initial for each $s$ $\}$.

  \item $(A,u) \step{P'}_s (B,v)$, where $A,B \in AT_s$, iff
  
  \begin{enumerate} 

   \item $P' = \{p \in B\cap P_s\} \cup \{a \in \calM \mid ~\mbox{$!a_{s'}$ or $?a_{s'}$ is in $B$ for some $s'\in Ag$}\}$.


   \item $A \Funnyto_\ell B$.

   \item The set $v$ is defined as follows: 

   \[ v = \left\{ \begin{array}{ll}
                \{ \Diamond\alpha \in B \mid \alpha \not \in B \} &                       \mbox{if $u = \emptyset$} \\
                \{ \Diamond\alpha \in u \mid \alpha \not \in B \} &                       \mbox{otherwise} 
                \end{array}
                \right. \]

  \end{enumerate}

  \item $(A,u) \step {\lambda} (B,v)$ iff    $A \Funnyto_\lambda B$.

  \item For every $s_i \ne s_j \in  Ag$, for every $a \in \calM$, for every $(A,u)  \in Q_{s_i}$, if $!a_{s_j} \in A$ then there exists $(B,v)  \in Q_{s_j}$ such that $(A,u) \step {\lambda}   (B,v)$.
\end{enumerate}
\end{dfn}
\noindent We denote $S_{\psi_0}$ by $S_0$ and assert the following with the proof in the appendix.
\begin{thm}
\label{thm:pLTL-sat-thm}
$Models(\psi_0)=\posetlang{S_0}$.
\end{thm}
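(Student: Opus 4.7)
The plan is to prove the two inclusions $Models(\psi_0) \subseteq \posetlang{S_0}$ and $\posetlang{S_0} \subseteq Models(\psi_0)$ separately, using the standard formulas-as-atoms labelling that is characteristic of the automata-theoretic approach for temporal logic.

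For the forward inclusion, given $M = (E_{s_1}, \ldots, E_{s_n}, \le, V) \in Models(\psi_0)$, I would define for each event $e \in E_s$ the $s$-atom
\[ A(e) \defn \{\alpha \in CL_s \mid M, e \models_s \alpha\}. \]
The first step is a routine verification that $A(e)$ is indeed an $s$-atom (the closure under booleans, the fixpoint clause for $\Diamond$, and the fact that $\ominus \False \in A(e)$ precisely when $e = \bot_s$). Next, I would define a pending-eventuality component $u(e)$ by induction along $\le_s$: at $\bot_s$ take $u = \emptyset$, and at an event with predecessor $e'$ use the case split in the definition of the transition relation of $S_0$ (reset to the fresh unfulfilled eventualities at $A(e)$ if $u(e') = \emptyset$, else carry forward those in $u(e')$ still unfulfilled at $A(e)$). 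The run is then $\rho(c) = ((A(e_1), u(e_1)), \ldots, (A(e_n), u(e_n)))$ for $c = (e_1, \ldots, e_n) \in \calC_M$. The verifications that $\rho$ is a well-defined run, that the local transitions $A(e) \Funnyto_\ell A(e')$ hold for $e \lessdot_s e'$, and that the communication constraints $A(e) \Funnyto_\lambda A(e')$ hold for $(e,e') \in <_c$ all follow directly from the semantic clauses for $\nxt$, $\ominus$, $!a_{s'}$ and $?a_{s'}$. Acceptance ($\nxt \False$ in the final atom, $u = \emptyset$) reduces to showing all eventualities are fulfilled on the finite partial order, which is where finiteness of $E_s$ is used.

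For the reverse inclusion, given a Lamport diagram $D$ with an accepting run $\rho$ of $S_0$, write $\rho(c)[i] = (A_i(c), u_i(c))$. The crucial step is the \emph{truth lemma}: for every $e \in E_s$ and every $\alpha \in CL_s$,
\[ \alpha \in A_s(c_e) \iff D, e \models_s \alpha, \]
where $c_e$ is any configuration containing $e$ in its $s$-th slot. This is proved by induction on the structure of $\alpha$. The boolean cases are immediate from the atom definition; the cases $!a_{s'}$ and $?a_{s'}$ use the communication constraint $\Funnyto_\lambda$ and condition (6) in the definition of $S_0$ that forces the existence of a matching $\lambda$-partner; the $\nxt$ and $\ominus$ cases come from $\Funnyto_\ell$ preserving the one-step successor/predecessor structure. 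The $\Diamond$ case is the only subtle one: the fixpoint clause in the atom definition gives one direction straightforwardly, while the converse requires showing that $\Diamond \alpha \in A$ cannot persist indefinitely along $\le_s$, which is precisely what the pending-eventuality book-keeping with $u$ and the acceptance requirement $u = \emptyset$ enforces. Once the truth lemma is in place, applying it to $\psi_0$ at the initial configuration combined with the $Init$ condition yields $D, c_0 \models \psi_0$.

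The main obstacle is the $\Diamond$ case of the truth lemma, i.e., showing that $\Diamond$-formulas carried in $u$ are genuinely fulfilled in $D$. The argument is essentially a discharge argument: if $\Diamond \alpha$ were present in some $u_s(c)$ along an infinite descending chain it would witness non-emptiness of $u_s$ at the maximal event, contradicting acceptance; since $E_s$ is finite, this gives fulfilment. Subtleties arise because the reset rule flushes $u$ whenever it reaches $\emptyset$ and refills it with the newly introduced eventualities, so some care is needed to argue that every $\Diamond \alpha$ ever added is actually discharged before the final state. Once this book-keeping is handled and matched with the forward direction's construction of $u$, both inclusions close and the theorem follows.
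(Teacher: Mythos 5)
Your proposal matches the paper's proof in both structure and substance: the paper also splits the theorem into the two inclusions (Lemmas \ref{lem:pLTL-sat-lemma1} and \ref{lem:pLTL-sat-lemma2}), proves a truth lemma by structural induction on $CL_s$ for the direction starting from an accepting run, and builds the canonical run from the semantic atoms $A_e=\{\alpha \mid D,e\models_s\alpha\}$ with the same reset/carry-forward book-keeping for $u$ in the other direction. The only cosmetic difference is that the paper discharges pending $\Diamond$-formulas by deriving $\nxt\Diamond\beta$ at the $s$-maximal event and contradicting $\nxt\False \in F_s$, whereas you appeal to the $u=\emptyset$ half of the acceptance condition; both rest on the same finiteness-plus-acceptance argument.
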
 

The above theorem says that, for a given choreography specification $\psi_0$ in $p$-LTL, the set of service implementations $S_0$, constructed using the above algorithm, actually conform to the behaviour specified by $\psi_0$. This is so as every partial order execution of $S_0$ is actually a model of the formula $\psi_0$ and vice versa. Thus, every choreography expressed in $p$-LTL is realizable.


\section{Implementation} 
The realizability algorithm for choreographies formulated in $p$-LTL, as given in the previous section, has been implemented.  (The program is available from the authors on request.) Now, we briefly explain the program and discuss the experimental results.

This program is written in C and takes formulas as input, in text form. The input formula is preprocessed and converted to a tree form. First, we find the number and names of services from the input. We maintain two different arrays for positive and negative formulas in the closure sets of each service. The sizes of these sets are decided at run time and obtained from the size of input. We read the formula tree and identify subformulas pertaining to the services and put them in the closure set of the respective services. Extra formulas in the closure sets are generated and stored in tree form in the same arrays.

We consider the services one after another and generate atom sets for each and store them in a doubly linked list. A single atom in the NFA of a particular service $s$ is interpreted as a boolean array of length $|CL_s|$ and stored as a number between $0$ and $2^{|CL_s|}-1$. Similarly, the set of unfulfilled $\Diamond$-requirements is taken as a boolean array of length $|U_s|$ and  stored as a number between $0$ and $2^{|U_s|}-1$. The states of $s$ are another doubly linked list where each entry contains two integers, one from the atom set and another from the $\Diamond$-requirement set. Thereafter, we take states from the state list two at a time and check whether the transition properties hold. If they do, the pair is put in the list for transitions, else dropped. This way, we generate transition set for each service $s$ and store them in the respective list. Once, we have obtained all the transition sets (over all the services), we take two transitions from the lists of two different services and check whether they satisfy the properties pertaining to the coupling relation. If they do, we add the pair to the doubly linked list for the coupling relation else we drop it.

In the following table, we present some of the experimental results
obtained from our implementation. The program is run on a laptop with
1.8 GiB RAM and a dual-core 2.10 GHz processor (Intel Pentium B950)
and 32-bit OS (Ubuntu 12.04). We fix the number of services to $2$
and input choreographies with different number of local modalities and
send \& receive propositions.

\begin{center}
\begin{tabular}{l*{5}{c}r}
Size      & Local & Send-Receive & States & Transitions & Couplings  & Time (in ms) \\
\hline
7 	& 2 	& 0 	& 16 	& 32 	& 0 	&   10  \\
9 	& 4 	& 0 	& 64 	& 128  	& 0  	&   14  \\
5 	& 0 	& 2 	& 8 	& 16  	& 16    &   $<1$  \\
5 	& 1 	& 2 	& 80 	& 320  	& 8192  &   520  \\
11	& 2	& 4	& 128	& 512	& 32768	&   1710 \\
\end{tabular}
\end{center}
For different sizes of input choreographies, number of local temporal modalities ($\nxt,\Box,\Diamond$ etc.), and number of send \& receive propositions, we compute the number of states and transitions (over both the local automata), number of couplings ($\lambda$-transitions across the two automata) and the time taken (in ms) to generate the service implementations. Note that, when there are no send \& receive propositions in the input, there are no couplings in the generated SCA. Further, as we increase the number of send \& receive propositions by 2 with an attendant increase in local modalities by 1, the number of coupling shoots up by a multiple of 4. In fact, our suspicion is that too many useless (unreachable) states are being generated and consequently, the number of couplings is on a higher side.

\section{Discussion}
We have suggested that partial orders are a natural means for talking
of web service interactions and proposed a decidable local logic for specifying 
global conditions on interacting web services such that every formula specifies 
a realizable choreography. 

We have considered choreographies with finite partial order executions. A
natural question relates to choreographies with infinite executions. The
logic can be easily interpreted over such infinite behaviours as well, and
extending SCAs to run on them is straightforward as well. We can thus show
an analogue of Theorem \ref{thm:pLTL-sat-thm}, asserting realizability of 
specifications in the logic over infinite executions, but the technical details 
require some work.

The realizability algorithm which we have given in the paper is quite 
inefficient, in space as well as time, and needs improvement. Similarly
the implementation needs to be fine tuned to make use of partial order
methods and symmetries present in the global configuration space.

Another important area of further research would be improving the quality 
of the solution, that is, to move beyond realizability to realizing 
implementations with desired performance characteristics. For instance,
rather than specifying a fixed number of services {\em a priori}, we
can ask for the minimal number of services realizing a choreography.
The tradeoff between number of services and number of states of each
service or communications between them can be relevant for applications.

A closely related question is when service formulas of $p$-LTL are
used to specify service types instead of individual services. Multiple
instances of services of each type may compose together consistently. 
Such a logic would clearly be more expressive, and its realizability
problem is challenging.

An important theoretical question that arises from the discussion in
the paper is the identification of the largest (satisfiable) subclass 
of LTL choreography properties that are realizable. While the paper
presents a syntactic subclass that is sufficient, we need to expand
it further.

While the paper discusses an initial theoretical investigation, what would
be more interesting is the development of tools that facilitate analysis
of specialized classes of choreographies, and we intend to pursue this.

\bibliographystyle{eptcs}
\bibliography{ref}

\begin{thebibliography}{10}
\providecommand{\bibitemdeclare}[2]{}
\providecommand{\surnamestart}{}
\providecommand{\surnameend}{}
\providecommand{\urlprefix}{Available at }
\providecommand{\url}[1]{\texttt{#1}}
\providecommand{\href}[2]{\texttt{#2}}
\providecommand{\urlalt}[2]{\href{#1}{#2}}
\providecommand{\doi}[1]{doi:\urlalt{http://dx.doi.org/#1}{#1}}
\providecommand{\bibinfo}[2]{#2}

\bibitemdeclare{inproceedings}{AMNN05}
\bibitem{AMNN05}
\bibinfo{author}{Bharat \surnamestart Adsul\surnameend},
  \bibinfo{author}{Madhavan \surnamestart Mukund\surnameend},
  \bibinfo{author}{K.~Narayan \surnamestart Kumar\surnameend} \&
  \bibinfo{author}{Vasumathi \surnamestart Narayanan\surnameend}
  (\bibinfo{year}{2005}): \emph{\bibinfo{title}{Causal Closure for MSC
  Languages}}.
\newblock In \bibinfo{editor}{R.~\surnamestart Ramanujam\surnameend} \&
  \bibinfo{editor}{Sandeep \surnamestart Sen\surnameend}, editors: {\sl
  \bibinfo{booktitle}{FSTTCS}}, {\sl \bibinfo{series}{Lecture Notes in Computer
  Science}} \bibinfo{volume}{3821}, \bibinfo{publisher}{Springer}, pp.
  \bibinfo{pages}{335--347}.
\newblock \urlprefix\url{http://dx.doi.org/10.1007/11590156_27}.

\bibitemdeclare{article}{AEY05}
\bibitem{AEY05}
\bibinfo{author}{Rajeev \surnamestart Alur\surnameend}, \bibinfo{author}{Kousha
  \surnamestart Etessami\surnameend} \& \bibinfo{author}{Mihalis \surnamestart
  Yannakakis\surnameend} (\bibinfo{year}{2005}):
  \emph{\bibinfo{title}{Realizability and verification of MSC graphs}}.
\newblock {\sl \bibinfo{journal}{Theor. Comput. Sci.}}
  \bibinfo{volume}{331}(\bibinfo{number}{1}), pp. \bibinfo{pages}{97--114}.
\newblock \urlprefix\url{http://dx.doi.org/10.1016/j.tcs.2004.09.034}.

\bibitemdeclare{inproceedings}{BBO12}
\bibitem{BBO12}
\bibinfo{author}{Samik \surnamestart Basu\surnameend}, \bibinfo{author}{Tevfik
  \surnamestart Bultan\surnameend} \& \bibinfo{author}{Meriem \surnamestart
  Ouederni\surnameend} (\bibinfo{year}{2012}): \emph{\bibinfo{title}{Deciding
  choreography realizability}}.
\newblock In: {\sl \bibinfo{booktitle}{POPL}}, pp. \bibinfo{pages}{191--202}.
\newblock \urlprefix\url{http://doi.acm.org/10.1145/2103656.2103680}.

\bibitemdeclare{inproceedings}{BFF09}
\bibitem{BFF09}
\bibinfo{author}{Tevfik \surnamestart Bultan\surnameend},
  \bibinfo{author}{Chris \surnamestart Ferguson\surnameend} \&
  \bibinfo{author}{Xiang \surnamestart Fu\surnameend} (\bibinfo{year}{2009}):
  \emph{\bibinfo{title}{A Tool for Choreography Analysis Using Collaboration
  Diagrams}}.
\newblock In: {\sl \bibinfo{booktitle}{ICWS}}, pp. \bibinfo{pages}{856--863}.
\newblock \urlprefix\url{http://dx.doi.org/10.1109/ICWS.2009.100}.

\bibitemdeclare{article}{BF08}
\bibitem{BF08}
\bibinfo{author}{Tevfik \surnamestart Bultan\surnameend} \&
  \bibinfo{author}{Xiang \surnamestart Fu\surnameend} (\bibinfo{year}{2008}):
  \emph{\bibinfo{title}{Specification of realizable service conversations using
  collaboration diagrams}}.
\newblock {\sl \bibinfo{journal}{Service Oriented Computing and Applications}}
  \bibinfo{volume}{2}(\bibinfo{number}{1}), pp. \bibinfo{pages}{27--39}.
\newblock \urlprefix\url{http://dx.doi.org/10.1007/s11761-008-0022-7}.

\bibitemdeclare{inproceedings}{BFHS}
\bibitem{BFHS}
\bibinfo{author}{Tevfik \surnamestart Bultan\surnameend},
  \bibinfo{author}{Xiang \surnamestart Fu\surnameend}, \bibinfo{author}{Richard
  \surnamestart Hull\surnameend} \& \bibinfo{author}{Jianwen \surnamestart
  Su\surnameend} (\bibinfo{year}{2003}): \emph{\bibinfo{title}{Conversation
  specification: a new approach to design and analysis of e-service
  composition}}.
\newblock In: {\sl \bibinfo{booktitle}{WWW}}, pp. \bibinfo{pages}{403--410}.
\newblock \urlprefix\url{http://doi.acm.org/10.1145/775152.775210}.

\bibitemdeclare{incollection}{BFS07}
\bibitem{BFS07}
\bibinfo{author}{Tevfik \surnamestart Bultan\surnameend},
  \bibinfo{author}{Xiang \surnamestart Fu\surnameend} \&
  \bibinfo{author}{Jianwen \surnamestart Su\surnameend} (\bibinfo{year}{2007}):
  \emph{\bibinfo{title}{Analyzing Conversations: Realizability,
  Synchronizability, and Verification}}.
\newblock In: {\sl \bibinfo{booktitle}{Test and Analysis of Web Services}}, pp.
  \bibinfo{pages}{57--85}.
\newblock \urlprefix\url{http://dx.doi.org/10.1007/978-3-540-72912-9_3}.

\bibitemdeclare{inproceedings}{CHY07}
\bibitem{CHY07}
\bibinfo{author}{Marco \surnamestart Carbone\surnameend},
  \bibinfo{author}{Kohei \surnamestart Honda\surnameend} \&
  \bibinfo{author}{Nobuko \surnamestart Yoshida\surnameend}
  (\bibinfo{year}{2007}): \emph{\bibinfo{title}{Structured
  Communication-Centred Programming for Web Services}}.
\newblock In: {\sl \bibinfo{booktitle}{ESOP}}, pp. \bibinfo{pages}{2--17}.
\newblock \urlprefix\url{http://dx.doi.org/10.1007/978-3-540-71316-6_2}.

\bibitemdeclare{book}{CGP}
\bibitem{CGP}
\bibinfo{author}{Edmund~M. \surnamestart Clarke\surnameend},
  \bibinfo{author}{Orna \surnamestart Grumberg\surnameend} \&
  \bibinfo{author}{Doron \surnamestart Peled\surnameend}
  (\bibinfo{year}{2000}): \emph{\bibinfo{title}{Model Checking}}.
\newblock \bibinfo{publisher}{MIT Press}.

\bibitemdeclare{inproceedings}{DW07}
\bibitem{DW07}
\bibinfo{author}{Gero \surnamestart Decker\surnameend} \&
  \bibinfo{author}{Mathias \surnamestart Weske\surnameend}
  (\bibinfo{year}{2007}): \emph{\bibinfo{title}{Local Enforceability in
  Interaction Petri Nets}}.
\newblock In: {\sl \bibinfo{booktitle}{BPM}}, pp. \bibinfo{pages}{305--319}.
\newblock \urlprefix\url{http://dx.doi.org/10.1007/978-3-540-75183-0_22}.

\bibitemdeclare{inproceedings}{Fahn06}
\bibitem{Fahn06}
\bibinfo{author}{Manuel \surnamestart F{\"a}hndrich\surnameend},
  \bibinfo{author}{Mark \surnamestart Aiken\surnameend}, \bibinfo{author}{Chris
  \surnamestart Hawblitzel\surnameend}, \bibinfo{author}{Orion \surnamestart
  Hodson\surnameend}, \bibinfo{author}{Galen~C. \surnamestart Hunt\surnameend},
  \bibinfo{author}{James~R. \surnamestart Larus\surnameend} \&
  \bibinfo{author}{Steven \surnamestart Levi\surnameend}
  (\bibinfo{year}{2006}): \emph{\bibinfo{title}{Language support for fast and
  reliable message-based communication in singularity OS}}.
\newblock In: {\sl \bibinfo{booktitle}{EuroSys}}, pp.
  \bibinfo{pages}{177--190}.
\newblock \urlprefix\url{http://doi.acm.org/10.1145/1217935.1217953}.

\bibitemdeclare{phdthesis}{Fu}
\bibitem{Fu}
\bibinfo{author}{X.~\surnamestart Fu\surnameend} (\bibinfo{year}{2004}):
  \emph{\bibinfo{title}{Formal Specification and Verification of Asynchronously
  Communicating Web Services}}.
\newblock Ph.D. thesis, \bibinfo{school}{University of California, Santa
  Barbara}.

\bibitemdeclare{article}{FBS}
\bibitem{FBS}
\bibinfo{author}{Xiang \surnamestart Fu\surnameend}, \bibinfo{author}{Tevfik
  \surnamestart Bultan\surnameend} \& \bibinfo{author}{Jianwen \surnamestart
  Su\surnameend} (\bibinfo{year}{2004}): \emph{\bibinfo{title}{Conversation
  protocols: a formalism for specification and verification of reactive
  electronic services}}.
\newblock {\sl \bibinfo{journal}{Theor. Comput. Sci.}}
  \bibinfo{volume}{328}(\bibinfo{number}{1-2}), pp. \bibinfo{pages}{19--37}.
\newblock \urlprefix\url{http://dx.doi.org/10.1016/j.tcs.2004.07.004}.

\bibitemdeclare{inproceedings}{HB10}
\bibitem{HB10}
\bibinfo{author}{Sylvain \surnamestart Hall{\'e}\surnameend} \&
  \bibinfo{author}{Tevfik \surnamestart Bultan\surnameend}
  (\bibinfo{year}{2010}): \emph{\bibinfo{title}{Realizability analysis for
  message-based interactions using shared-state projections}}.
\newblock In: {\sl \bibinfo{booktitle}{SIGSOFT FSE}}, pp.
  \bibinfo{pages}{27--36}.
\newblock \urlprefix\url{http://doi.acm.org/10.1145/1882291.1882298}.

\bibitemdeclare{inproceedings}{HYC08}
\bibitem{HYC08}
\bibinfo{author}{Kohei \surnamestart Honda\surnameend}, \bibinfo{author}{Nobuko
  \surnamestart Yoshida\surnameend} \& \bibinfo{author}{Marco \surnamestart
  Carbone\surnameend} (\bibinfo{year}{2008}): \emph{\bibinfo{title}{Multiparty
  asynchronous session types}}.
\newblock In: {\sl \bibinfo{booktitle}{POPL}}, pp. \bibinfo{pages}{273--284}.
\newblock \urlprefix\url{http://doi.acm.org/10.1145/1328438.1328472}.

\bibitemdeclare{techreport}{wscdl}
\bibitem{wscdl}
\bibinfo{author}{Nickolas \surnamestart Kavantzas\surnameend},
  \bibinfo{author}{David \surnamestart Burdett\surnameend},
  \bibinfo{author}{Gregory \surnamestart Ritzinger\surnameend},
  \bibinfo{author}{Tony \surnamestart Fletcher\surnameend},
  \bibinfo{author}{Yves \surnamestart Lafon\surnameend} \&
  \bibinfo{author}{Charlton \surnamestart Barreto\surnameend}
  (\bibinfo{year}{2005}): \emph{\bibinfo{title}{Web Services Choreography
  Description Language Version 1.0}}.
\newblock \bibinfo{type}{Technical Report},
  \bibinfo{institution}{http://www.w3.org/TR/ws-cdl-10/}.

\bibitemdeclare{inproceedings}{KP06}
\bibitem{KP06}
\bibinfo{author}{Raman \surnamestart Kazhamiakin\surnameend} \&
  \bibinfo{author}{Marco \surnamestart Pistore\surnameend}
  (\bibinfo{year}{2006}): \emph{\bibinfo{title}{Analysis of Realizability
  Conditions for Web Service Choreographies}}.
\newblock In: {\sl \bibinfo{booktitle}{FORTE}}, pp. \bibinfo{pages}{61--76}.
\newblock \urlprefix\url{http://dx.doi.org/10.1007/11888116_5}.

\bibitemdeclare{inproceedings}{LW09}
\bibitem{LW09}
\bibinfo{author}{Niels \surnamestart Lohmann\surnameend} \&
  \bibinfo{author}{Karsten \surnamestart Wolf\surnameend}
  (\bibinfo{year}{2009}): \emph{\bibinfo{title}{Realizability Is
  Controllability}}.
\newblock In: {\sl \bibinfo{booktitle}{WS-FM}}, pp. \bibinfo{pages}{110--127}.
\newblock \urlprefix\url{http://dx.doi.org/10.1007/978-3-642-14458-5_7}.

\bibitemdeclare{article}{McN10}
\bibitem{McN10}
\bibinfo{author}{Ashley~T. \surnamestart McNeile\surnameend}
  (\bibinfo{year}{2010}): \emph{\bibinfo{title}{Protocol contracts with
  application to choreographed multiparty collaborations}}.
\newblock {\sl \bibinfo{journal}{Service Oriented Computing and Applications}}
  \bibinfo{volume}{4}(\bibinfo{number}{2}), pp. \bibinfo{pages}{109--136}.
\newblock \urlprefix\url{http://dx.doi.org/10.1007/s11761-010-0060-9}.

\bibitemdeclare{article}{MS06}
\bibitem{MS06}
\bibinfo{author}{Ashley~T. \surnamestart McNeile\surnameend} \&
  \bibinfo{author}{Nicholas \surnamestart Simons\surnameend}
  (\bibinfo{year}{2006}): \emph{\bibinfo{title}{Protocol modelling: A modelling
  approach that supports reusable behavioural abstractions}}.
\newblock {\sl \bibinfo{journal}{Software and System Modeling}}
  \bibinfo{volume}{5}(\bibinfo{number}{1}), pp. \bibinfo{pages}{91--107}.
\newblock \urlprefix\url{http://dx.doi.org/10.1007/s10270-005-0100-7}.

\bibitemdeclare{inproceedings}{MR}
\bibitem{MR}
\bibinfo{author}{B.~\surnamestart Meenakshi\surnameend} \&
  \bibinfo{author}{R.~\surnamestart Ramanujam\surnameend}
  (\bibinfo{year}{2000}): \emph{\bibinfo{title}{Reasoning about Message Passing
  in Finite State Environments}}.
\newblock In: {\sl \bibinfo{booktitle}{ICALP}}, pp. \bibinfo{pages}{487--498}.
\newblock \urlprefix\url{http://dx.doi.org/10.1007/3-540-45022-X_41}.

\bibitemdeclare{inproceedings}{P77}
\bibitem{P77}
\bibinfo{author}{Amir \surnamestart Pnueli\surnameend} (\bibinfo{year}{1977}):
  \emph{\bibinfo{title}{The Temporal Logic of Programs}}.
\newblock In: {\sl \bibinfo{booktitle}{FOCS}}, pp. \bibinfo{pages}{46--57}.
\newblock
  \urlprefix\url{http://doi.ieeecomputersociety.org/10.1109/SFCS.1977.32}.

\bibitemdeclare{inproceedings}{R96}
\bibitem{R96}
\bibinfo{author}{R.~\surnamestart Ramanujam\surnameend} (\bibinfo{year}{1996}):
  \emph{\bibinfo{title}{Locally Linear Time Temporal Logic}}.
\newblock In: {\sl \bibinfo{booktitle}{LICS}}, \bibinfo{publisher}{IEEE
  Computer Society}, pp. \bibinfo{pages}{118--127}.
\newblock
  \urlprefix\url{http://doi.ieeecomputersociety.org/10.1109/LICS.1996.561311}.

\bibitemdeclare{article}{RGG96}
\bibitem{RGG96}
\bibinfo{author}{Ekkart \surnamestart Rudolph\surnameend},
  \bibinfo{author}{Peter \surnamestart Graubmann\surnameend} \&
  \bibinfo{author}{Jens \surnamestart Grabowski\surnameend}
  (\bibinfo{year}{1996}): \emph{\bibinfo{title}{Tutorial on Message Sequence
  Charts}}.
\newblock {\sl \bibinfo{journal}{Computer Networks and ISDN Systems}}
  \bibinfo{volume}{28}(\bibinfo{number}{12}), pp. \bibinfo{pages}{1629--1641}.
\newblock \urlprefix\url{http://dx.doi.org/10.1016/0169-7552(95)00122-0}.

\bibitemdeclare{article}{SBR12}
\bibitem{SBR12}
\bibinfo{author}{Gwen \surnamestart Sala{\"u}n\surnameend},
  \bibinfo{author}{Tevfik \surnamestart Bultan\surnameend} \&
  \bibinfo{author}{Nima \surnamestart Roohi\surnameend} (\bibinfo{year}{2012}):
  \emph{\bibinfo{title}{Realizability of Choreographies Using Process Algebra
  Encodings}}.
\newblock {\sl \bibinfo{journal}{IEEE T. Services Computing}}
  \bibinfo{volume}{5}(\bibinfo{number}{3}), pp. \bibinfo{pages}{290--304}.
\newblock
  \urlprefix\url{http://doi.ieeecomputersociety.org/10.1109/TSC.2011.9}.

\bibitemdeclare{inproceedings}{SBFZ07}
\bibitem{SBFZ07}
\bibinfo{author}{Jianwen \surnamestart Su\surnameend}, \bibinfo{author}{Tevfik
  \surnamestart Bultan\surnameend}, \bibinfo{author}{Xiang \surnamestart
  Fu\surnameend} \& \bibinfo{author}{Xiangpeng \surnamestart Zhao\surnameend}
  (\bibinfo{year}{2007}): \emph{\bibinfo{title}{Towards a Theory of Web Service
  Choreographies}}.
\newblock In: {\sl \bibinfo{booktitle}{WS-FM}}, pp. \bibinfo{pages}{1--16}.
\newblock \urlprefix\url{http://dx.doi.org/10.1007/978-3-540-79230-7_1}.

\bibitemdeclare{inproceedings}{T95}
\bibitem{T95}
\bibinfo{author}{P.~S. \surnamestart Thiagarajan\surnameend}
  (\bibinfo{year}{1995}): \emph{\bibinfo{title}{A Trace Consistent Subset of
  PTL}}.
\newblock In \bibinfo{editor}{Insup \surnamestart Lee\surnameend} \&
  \bibinfo{editor}{Scott~A. \surnamestart Smolka\surnameend}, editors: {\sl
  \bibinfo{booktitle}{CONCUR}}, {\sl \bibinfo{series}{Lecture Notes in Computer
  Science}} \bibinfo{volume}{962}, \bibinfo{publisher}{Springer}, pp.
  \bibinfo{pages}{438--452}.
\newblock \urlprefix\url{http://dx.doi.org/10.1007/3-540-60218-6_33}.

\end{thebibliography}
\newpage
\appendix
\section{Appendix: Proof of Correctness}
We now prove theorem \ref{thm:pLTL-sat-thm}.
 
\begin{lem}
\label{lem:pLTL-sat-lemma1}
$\posetlang{S_0}\subseteq  Models(\psi_0)$. 
\end{lem}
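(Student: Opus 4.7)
\begin{prf}
Let $D = (E_{s_1}, \ldots, E_{s_n}, \le, V) \in \posetlang{S_0}$, and fix an accepting run $\rho : \calC_D \to \widetilde{Q}$ of $S_0$ on $D$. Because any step in the run from $c$ to $c'$ (adding an event of some service $i$) alters only the $i$-th component of the global state, for each event $e \in E_s$ the $s$-atom appearing in $\rho(c)$ is the same for every $c \in \calC_D$ whose $s$-component is $e$. Call this atom $A_e$, and likewise let $u_e$ denote the $\Diamond$-requirement set at $e$. The plan is to prove the following \emph{Truth Lemma} by induction on the structure of $s$-local formulas: for every $s \in Ag$, every $\alpha \in CL_s$, and every $e \in E_s$,
\[ \alpha \in A_e \iff D, e \models_s \alpha. \]
Applying this at the initial configuration $(\bot_1, \ldots, \bot_n)$, where $\rho$ yields an initial tuple $\widetilde{A}$ containing $\psi_0$, then lifting via the clauses defining $\psi_0 \in \widetilde{A}$ immediately gives $D, (\bot_1, \ldots, \bot_n) \models \psi_0$, hence $D \in Models(\psi_0)$.

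The base cases are handled as follows. For a propositional letter $p \in P_s$, clause $(4.a)$ of the SCA transition forces $p$ into the label of $e$ exactly when $p \in A_e$. For $!a_{s'}$: if $!a_{s'} \in A_e$, the last clause in the definition of $S_0$ guarantees a $\lambda$-transition $(A_e, u_e) \step{\lambda} (B, v)$ with $A_e \Funnyto_\lambda B$, and by the run definition this $\lambda$-constraint is only consumable by some receive event $e' \in E_{s'}$ with $(e, e') \in <_c$ and $a$ in the label of $e$; conversely the presence of such a $<_c$-edge in $D$ coupled with $a \in V(e)$ forces $!a_{s'} \in A_e$ by atom consistency and the transition clause $(4.a)$. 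The case of $?a_{s'}$ is dual. The boolean cases are straightforward from the atom definition.

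For the inductive cases, if $e \lessdot_s e'$ then $A_e \Funnyto_\ell A_{e'}$ by the run, so $\nxt \alpha \in A_e$ iff $\alpha \in A_{e'}$, which by induction matches $D, e \models_s \nxt \alpha$; the clause $\nxt \False \in A_e$ detects $s$-maximality via $F_s$. The $\ominus \alpha$ case is symmetric, using initiality ($\ominus \False \in A_{\bot_s}$). The main obstacle is the eventuality $\Diamond \alpha$. The $(\Leftarrow)$ direction follows by induction on the $\le_s$-distance from $e$ to the witnessing $e'$, using atom clause $(3)$. The $(\Rightarrow)$ direction is the classical liveness argument: if $\Diamond \alpha \in A_e$, then either $\alpha \in A_e$ (done) or $\nxt(\Diamond \alpha) \in A_e$, in which case the transition clause $(4.c)$ inserts $\Diamond \alpha$ into $u_{e''}$ for the $\lessdot_s$-successor $e''$ of $e$, and $\Diamond \alpha$ persists in subsequent $u$-sets until $\alpha$ is realised. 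Since $\rho$ is accepting, the $s$-maximal event carries $u = \emptyset \in F_s$, so every pending $\Diamond$-requirement must be discharged along the $\le_s$-chain above $e$, yielding the desired witness.

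The delicate point, which deserves the most care when writing out the full proof, is this last liveness step: one needs the inductive invariant that $u_{e''}$ exactly tracks the set of $\Diamond \alpha$ that have been placed in $A$ at or before $e''$ but not yet fulfilled, combined with the acceptance condition $u = \emptyset$ at the $s$-maximal event to rule out infinite postponement within the finite chain $E_s$. Once the Truth Lemma is established, the conclusion $D \in Models(\psi_0)$ follows immediately from $\psi_0 \in \widetilde{A}$ at the initial configuration, completing the lemma.
\end{prf}
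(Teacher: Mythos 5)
Your proposal follows essentially the same route as the paper: extract atoms $A_e$ and requirement sets $u_e$ from the accepting run, prove the Truth Lemma ($\alpha \in A_e$ iff $D,e \models_s \alpha$) by structural induction with a secondary induction on chain length for one direction of $\Diamond$, and conclude at the initial configuration via $\psi_0 \in \widetilde{A}$. The only (cosmetic) divergence is in discharging pending eventualities at the $s$-maximal event: you invoke the $u=\emptyset$ half of the acceptance condition, while the paper derives a contradiction with $\nxt\False$ belonging to the final atom — both halves of $F_s$ serve the same purpose here.
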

\begin{proof}
Let $D \in \posetlang{S_0}$.  There is an accepting run $\rho:\calC_D \to 
\widetilde{Q}$ of $S_0$ on $D$.  Let $e$ be an $s_i$-event of $D$. We associate 
an $s_i$-atom $A_e$ with $e$ below. First let $c$ be a configuration with 
$e$ at the $i$-th position. Then, $\rho(c)[i]$ is a tuple $(A,u)$, set 
$A_e = A$ and similarly $u_e=u$.

We can define the valuation function for events of $D$ as follows: for all 
$e \in E_s$, $V(e) \defn (A_e \cap P_s)\cup\{a\in \calM \mid !a_{s_j} \in A_e\}$.

The following assertion can be proved by induction on the structure of formulas 
in $CL_s$.

{\bf Claim:} For all $\alpha \in CL_s$, for all $e \in E_s$, 
$D, e \models_s \alpha$ iff $\alpha \in A_e$. 

Assuming the claim, we show that $D \in Models(\psi_0)$. Note that
by construction of the automaton, if $((A_1,u1) \ldots, (A_n,u_n)) \in Init$ then
$\psi_0 \in (A_1, \ldots, A_n)$. 

Let $c_0 =(\bot_1,\cdots,\bot_n)$ be the initial global configuration.
$\rho(c_0) \in Init$. Thus, we only need to show by above that for
any global formula $\psi$, we have:  $D, c_0 \models \psi$ iff
$\psi \in (A_1, \ldots, A_n)$. This is shown by an easy induction.

\begin{itemize}
\item ($\psi_0=\alpha@s$): $\psi@s \in  (A_1, \ldots, A_n)$ iff 
$\alpha \in A_{e_s}$ by the definition of membership in global atom\\
				iff $D, \bot_s \models_s \alpha$ by Claim above\\
				iff $D, c_0 \models \alpha@s$ by semantics. 
\end{itemize} 
The other cases are similar, by applying the induction hypothesis.
Thus, $D, c_0 \models \psi_0$ and hence $D \models \psi_0$. That is, 
$D \in Models(\psi_0)$, as required.

We proceed to prove the claim.

{\bf Proof:} 

\begin{description}
\item[($\alpha = p$)]$D, e \models_s p$ iff $p \in V(e)$ by definition of local satisfiability\\
						iff $p \in A_e$ by the definition of $V(e)$.

\item[($\alpha = \nxt \beta$)] Suppose $D, e \models_s \nxt \beta$.  We must show that $\nxt\beta \in A_e$.
By the definition of $\models_s$, there exists $e' \in E_s$ such that $e \lessdot_s e'$ and 
$D, e' \models_s \beta$. Let $c \in \calC$ be a configuration with $e$ as the $s$th element. Let $c'\in \calC$ be another configuration with $e'$ as the $s$th element and all the other elements being same as that in $c$. By the definition of run, we have $\rho(c)[s] \step{A_{e'} \cap P_s} \rho(c')[s]$. Therefore, for all $\nxt \beta \in subf_s$,   $\nxt\beta \in A_e$ iff $\beta \in A_{e'}$. $\because$ by the induction hypothesis, $\beta \in  A_{e'}$ hence, we have $\nxt \beta \in A_e$ and we are done.  

Conversely, suppose $\nxt \beta \in A_e$. We must show that 
$D, e \models_s \nxt \beta$. By the induction hypothesis and by the 
semantics of the modality $\nxt$, it suffices to prove that there exists 
$e' \in E_s$ such that $e \lessdot_s e'$ and $\beta \in A_{e'}$. 
Suppose not. Then, $e$ must the $s$-maximal event. So $(A_e,u_e)$ must be a final state, but it is not. Therefore there exists $e' \in E_s$ such that $e \lessdot_s e'$. Let $c \in \calC$ be a configuration with $e$ as the $s$th element. Let $c'\in \calC$ be another configuration with $e'$ as the $s$th element and all the other elements being same as that in $c$. By the definition of run, we have $\rho(c)[s] \step{A_{e'} \cap P_s} \rho(c')[s]$. Therefore, for all $\nxt \beta \in subf_s$,   $\nxt\beta \in A_e$ iff $\beta \in A_{e'}$. Now, $\nxt \beta \in A_e$ as given so $\beta \in A_{e'}$. Thus, we are done.

\item[($\alpha = \Diamond \beta$)] Suppose $D, e \models_s \Diamond \beta$. We must show that $\Diamond \beta \in A_e$. Since $D, e \models_s \Diamond \beta$, there exists $e' \in E_s$ such that $e \le_s e'$, $D, e' \models_s \beta$. That is, by induction hypothesis, there exists $e' \in E_s$ such that $e \le_s e'$, $\beta \in A_{s'}$. We need to show that $\Diamond \beta \in A_e$.

Let $e = e_1\lessdot_s e_2\lessdot_s\cdots \lessdot_s e_k=e'$ be the sequence of events through which $e'$ is reached from $e$.
We show that $\Diamond \beta \in A_e$ by a second induction on 
$l = k-1$.

{\it Base case:} $(l=0)$. 

Then, $k=1$ and so $D, e \models_s \beta$. By the main induction hypothesis, $\beta \in A_e$ and (by the definition of atom), $\Diamond \beta \in A_e$. 

{\it Induction step:} $(l > 0)$.

By the semantics of the modality $\Diamond$, $D, e \models_s \lnot \beta$ and 
$D, e_2 \models_s \Diamond \beta$. Therefore, by the secondary induction 
hypothesis, $\Diamond \beta \in A_{e_2}$. From the definition of $\to$, we 
have $\nxt (\Diamond \beta) \in A_e$.  By the main induction hypothesis, we 
have $\lnot \beta \in A_e$ as well. Combining these facts and using the definition of an atom, we see that $\Diamond \beta \in A_e$ as required. 

Conversely, suppose $\Diamond \beta \in A_e$. We must show that $D, e \models_s 
\Diamond \beta$. Since $\rho$ is an accepting run of $S_0$, there is a maximal
event $e' \in E_s$. Now suppose that $\lnot \beta \in A_{e''}$ for every
$e \leq e'' \leq e'$. Then by an argument similar to the above, we can show
that $\nxt \Diamond \beta \in A_{e''}$ for every $e \leq e'' \leq e'$. Thus
we get $\nxt \Diamond \beta \in A_{e'}$ at the maximal event $e'$ contradicting
the fact that $\nxt \False \in \in A_{e'}$. Thus, there exists $e''$ such that
$e \leq e'' \leq e'$ and $\beta \in A_{e'}$. Then what we need follows by
induction hypothesis.

%

\item[($\alpha = \ominus \beta$)]
($\To$) Given $D, e \models_s \ominus \beta$. By the definition of local satisfiability, there exists $e' \in E_s$ such that $e' \lessdot_s e$ and $D, e' \models \beta$. Let $c \in \calC$ be a configuration with $e$ as the $s$th element. Let $c'\in \calC$ be another configuration with $e'$ as the $s$th element and all the other elements being same as that in $c$. By the definition of run, we have $\rho(c')[s] \step{A_{e} \cap P_s} \rho(c)[s]$. By the definition of $\to_s$, $\ominus \beta \in A_{e}$ iff $\beta \in A_{e'}$. By induction hypothesis, $\beta \in A_{e'}$, so $\ominus \beta \in A_e$ and we are done. 

($\From$) Given $\ominus \beta \in A_e$. It suffices to show that there exists $e' \in E_s$ such that $e' \lessdot_s e$ and $\beta \in A_{e'}$.
Suppose there is no $e'$ such that $e' \lessdot_c e$. That is, $e$ is the $s$-minimum event. Then, $(A_e,u_e)\in I_s$. So, for every $\ominus \gamma \in CL_s$, $\ominus \gamma \not \in A_e$. This is a contradiction. Therefore, there exists $e' \in E_s$ such that $e' \lessdot_s e$.
Now, let $c \in \calC$ be a configuration with $e$ as the $s$th element. Also, let $c'\in \calC$ be another configuration with $e'$ as the $s$th element and all the other elements being same as that in $c$. By the definition of run, we have $\rho(c')[s] \step{A_{e} \cap P_s} \rho(c)[s]$. By the definition of $\to_s$, $\ominus \beta \in A_{e}$ iff $\beta \in A_{e'}$. Therefore, $\beta \in A_{e'}$ as we already have $\ominus \beta \in A_e$. Thus, we have shown that there exists $e' \in E_s$ such that $e' \lessdot_s e$ and $\beta \in A_{e'}$ and we are done.

\item[($\alpha = !a_{s_j}$)]
($\To$) Given $D, e \models_s !a_{s_j}$. There exists $e' \in E_{s_j}$ such that $e <_c e'$ and $a \in V(e)$. By the definition of run, $(A_{e},u_{e}) \step{\lambda} (A_{e'},u_{e'})$. By the definition of $\Funnyto_\lambda$, $!a_{s_j} \in A_e$.

($\From$) Given $!a_{s_j} \in A_e$. There exists $(A_{e'},u_{e'}) \in Q_{s_j}$ such that $(A_{e},u_{e}) \step{\lambda} (A_{e'},u_{e'})$. By the definition of the run $e <_c e'$. By the definition of $V$, $a \in V(e)$. Therefore, $D, e \models_s !a_{s_j}$.

\item[($\alpha = ?a_{s_j}$)] The reasoning about $?a_{s_j}$ is similar to that of $!a_{s_j}$ as given above.
\end{description}
\end{proof}

\begin{lem}
\label{lem:pLTL-sat-lemma2}
$Models(\psi_0)\subseteq \posetlang{S_0}$. 
\end{lem}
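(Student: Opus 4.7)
The plan is to mirror the construction of Lemma \ref{lem:pLTL-sat-lemma1} in reverse. Given $D = (E_{s_1},\ldots,E_{s_n}, \le, V) \in Models(\psi_0)$, for each event $e \in E_s$ I would associate the canonical $s$-atom
\[ A_e \defn \{\alpha \in CL_s \mid D, e \models_s \alpha\}. \]
A routine check (propositional completeness, fixpoint unfolding for $\Diamond$, and $\ominus \False$ holding at $e$ iff $e = \bot_s$) shows that each $A_e$ satisfies every clause in the definition of an $s$-atom and that $A_{\bot_s}$ is initial. Moreover, from $D \models \psi_0$ together with the global membership clauses, one obtains $\psi_0 \in (A_{\bot_1}, \ldots, A_{\bot_n})$, which will deliver the initial-state condition of the SCA.

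Next, to supply the second component of the local states, I would define the ``pending $\Diamond$-obligations'' $u_e \subseteq U_s$ by induction along $\lessdot_s$: set $u_{\bot_s} = \emptyset$, and for $e \lessdot_s e'$, define $u_{e'}$ by exactly the rule hard-wired into the automaton's transition relation. The candidate run is then
\[ \rho(c)[i] \defn (A_{e_i}, u_{e_i}) \quad \text{for } c = (e_1,\ldots,e_n) \in \calC_D. \]
For the local transition requirement $\rho(c)[i] \step{V(e)}_{s_i} \rho(c')[i]$, where $c'$ extends $c$ by adding the $s_i$-event $e$, the relation $A_{pred(e)} \Funnyto_\ell A_e$ follows directly from the semantics of $\nxt$ and $\ominus$; the $u$-update holds by construction; and the label $V(e)$ agrees with the automaton's $P'$ by definition of $A_e$ on the propositions $P_{s_i}$ and on the send/receive markers in $\calM$.

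For the communication clause, whenever $e$ is a receive event with $?a_{s_i} \in A_e$, the semantics of $?a_{s_i}$ forces the existence of a matching send event $e' \in E_{s_j}$ with $(e',e) \in <_c$ and $!a_{s_j} \in A_{e'}$; one then exhibits the local predecessors required by $\Funnyto_\lambda$ (noting that initial events are not communication events, so the required predecessors always exist), yielding $A_{e'} \Funnyto_\lambda A_e$ and hence the $\lambda$-constraint $\rho(\cdot)[j] \step{\lambda} \rho(\cdot)[i]$.

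The main obstacle will be acceptance: for each local-maximal event $e_i^{\max}$ we must have $\nxt \False \in A_{e_i^{\max}}$ (which is immediate, since a $\lessdot_{s_i}$-maximal event satisfies $\nxt \False$ by the semantics), and, more delicately, $u_{e_i^{\max}} = \emptyset$. I would prove the key invariant: whenever $\Diamond \alpha \in u_e$, there exists $e'' \in E_{s_i}$ with $e \le_{s_i} e''$ at which $\alpha$ holds, so $\Diamond \alpha$ is eventually discharged and removed. This uses the semantics of $\Diamond \alpha$ (which guarantees a descendant witness) together with the finiteness of $E_{s_i}$ and careful case analysis on the two clauses of the $u_{e'}$ definition, ensuring that obligations carried forward via the second clause cannot accumulate past the local maximum. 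Combining all these checks yields an accepting run of $S_0$ on $D$, giving $D \in \posetlang{S_0}$ as required.
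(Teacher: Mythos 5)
Your construction is essentially the paper's own: take the canonical atoms $A_e=\{\alpha\in CL_s\mid D,e\models_s\alpha\}$, generate the $u$-sets inductively by the automaton's update rule starting from $u_{\bot_s}=\emptyset$, and read off the run as $\rho(c)[i]=(A_{e_i},u_{e_i})$. The only differences are cosmetic: you use the canonical atom directly at $\bot_s$ where the paper massages it through an auxiliary set $\Delta A_{\bot_s}$, and you spell out the acceptance argument (the discharge invariant for pending $\Diamond$-obligations on a finite local chain) that the paper dismisses as ``easily shown''; both points are correct as you state them.
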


\begin{proof}
Conversely, suppose $D \models \psi_0$, where $D = (E_{s_1}, \ldots, E_{s_n},\le_{s_1},\cdots,\le_{s_n},<_c V)$. To show that $D$ is a member of $\posetlang{S_0}$, we have to construct an accepting run of $\posetlang{S_0}$ on $D$.

\noindent For every $s \in Ag$, for every $e \in E_s$, define the set $A_e$ as follows:
\[A_e=\{\alpha \in CL_s \mid D,e \models \alpha\}\]
%
%
Let $e_s^0$ be the minimum event in $E_s$. We construct $A_{\bot_s}$ from $A_{e_s^0}$ as follows:
\[A_{\bot_s}=\Delta A_{\bot_s} \cup \{\lnot \alpha \in CL_s \mid \alpha \not \in \Delta A_{\bot_s}\} \cup \{\alpha \lor \beta \in CL_s \mid \alpha \in \Delta A_{\bot_s}\} \cup \{\nxt \Diamond \alpha \mid \Diamond \alpha, \lnot \alpha \in \Delta A_{\bot_s}\}~~~\mbox{where}~~\]
\[\Delta A_{\bot_s}= \{\lnot p \mid p \in CL_s \cap P_s\} \cup \{\lnot !a_{s'},\lnot ?a_{s'}\mid !a_{s'},?a_{s'}\in CL_s\}\cup \{\lnot \ominus \beta \mid \ominus \beta \in CL_s\}\cup \delta A_{e_s^0}~~~\mbox{and}~~\]
\[\delta A_{e_s^0}=\{\nxt \alpha \in CL_s \mid \alpha \in A_{e_s^0}\}\cup \{\alpha \mid \ominus \alpha \in A_{e_s^0}\} \cup \{\ominus False\}\]
For every $s \in Ag$, for every $f \in E_s'$, define the set $u_f$ inductively as follows:
\begin{itemize}
\item $u_{\bot_s}=\emptyset$,
\item for every $f,f'\in E_s'$ such that $f\lessdot_s f'$,
   \[ u_{f'} = \left\{ \begin{array}{ll}
                \{\Diamond \alpha  \in A_{f'} \mid \alpha \not \in A_{f'} \} &                       \mbox{if $u_e = \emptyset$} \\
                \{ \Diamond \alpha \in u_f \mid \alpha \not \in A_{f'} \} &                       \mbox{otherwise} 
                \end{array}
                \right. \]

\end{itemize}
Now, for any configuration $c=(f_{s_1},\ldots,f_{s_n})$ in $\calC_D$ define
\[\rho(c)=\langle(A_{f_{s_1}},u_{f_{s_1}}),\ldots,(A_{f_{s_n}},u_{f_{s_n}})\rangle.\]
%
%
It is now easily shown that $\rho$ is an accepting run of $\posetlang{S_0}$ on $D$ and hence, $D \in \posetlang{S_0}$ and we are done.
\end{proof}

The two foregoing lemmae \ref{lem:pLTL-sat-lemma1} and \ref{lem:pLTL-sat-lemma2}, 
together, give us the theorem \ref{thm:pLTL-sat-thm}.



\end{document}